\setlist[itemize]{
  itemsep=2pt,
  parsep=0pt,
  topsep=4pt,
  partopsep=0pt,
  leftmargin=1.5em
}
\setlist[enumerate]{
  itemsep=2pt,
  parsep=0pt,
  topsep=4pt,
  partopsep=0pt,
  leftmargin=1.5em
}
\newcommand{\headers}[2]{%
  \fancyhead[CO]{#1}
  \fancyhead[CE]{#2}
}
\title{A joint optimization approach to identifying sparse dynamics using least squares kernel collocation}
\author{
  Alexander W.~Hsu\thanks{Department of Applied Mathematics, University of Washington, Seattle, WA, USA} \thanks{Corresponding author: \texttt{owlx@uw.edu}.}
  \and Ike~Griss~Salas\footnotemark[1]
  \and Jacob~M.~Stevens-Haas\footnotemark[1]
  \and J.~Nathan~Kutz\footnotemark[1]
  \and Aleksandr~Aravkin\footnotemark[1]
  \and Bamdad~Hosseini\footnotemark[1]
}
\date{}
\begin{document}
\maketitle
\begin{abstract}
We develop an all-at-once modeling framework for learning systems of ordinary differential 
equations (ODE) from scarce, partial, and noisy observations of the states. The proposed methodology 
amounts to a combination of sparse recovery strategies for the ODE over a function library
combined with techniques from reproducing kernel Hilbert space (RKHS) theory for estimating 
the state and discretizing the ODE. Our numerical experiments reveal that the proposed strategy leads to 
significant gains in terms of accuracy, sample efficiency, and robustness to noise, both in terms of 
learning the equation and estimating the unknown states. This work demonstrates
capabilities well beyond existing and widely used algorithms while extending the modeling flexibility of other recent developments in equation discovery.
\end{abstract}

\section{Introduction}

The identification of ordinary differential equations (ODEs) and dynamical systems 
is a fundamental problem in control \cite{gevers_personal_2006,ljung2010perspectives,ljungTheory}, data assimilation \cite{smith_application_1962,Jazwinski1970},
and more recently in scientific machine learning (ML) \cite{Brunton2016,rackauckas2020universal,Raissi2018Hidden}. While algorithms 
such as Sparse Identification of Nonlinear Dynamics (SINDy) and its variants \cite{Kaptanoglu2022} are widely used by practitioners, they often fail in scenarios where observations of the 
state of the system are scarce, indirect, and noisy. 
In such scenarios modifications to 
SINDy-type methods are required to enforce additional constraints on the recovered 
equations to make them consistent with the observational data. Put simply, 
traditional SINDy-type methods work in two steps: (1) the data is used to filter the 
state of the system and estimate the derivatives, and (2) the filtered state is used to learn the underlying dynamics. 
In the regime of scarce, noisy and incomplete data, step 1 is inaccurate, which can propagate to poor results in the subsequent step 2. 

In this paper, we propose an all-at-once approach to filtering and equation learning based on collocation in a reproducing kernel Hilbert space (RKHS) which we term Joint SINDy (JSINDy), and show that the issues above can be mitigated by performing both steps together.
This joins a broader class of dynamics-informed methods that integrate the governing equations directly into the learning objective, either as hard constraints or as least-squares relaxations, which couples the problems of state estimation and model discovery.
Representative examples include physics-informed and sparse-regression frameworks based on neural networks, splines, kernels, finite differences, and adjoint methods \cite{Raissi2019-so,Chen2021,ijcai2021p283,jalalian2025dataefficientkernelmethodslearning,Fung2025-odr,Hokanson2023-sr,rackauckas2020universal}.

\subsection{Problem setup}
Consider a process 
\(\vect{x}(t) = (x_1(t),...,x_d(t)) \in C^p([0,T],\R^d)\), for \(d \ge 1\), 
along with the observables  \((\vect{y}_n)_{n=1}^N\) that are connected via the  
model 
\begin{align}\label{eqn:general_setup}
    \dtp\vect{x}(t) &
    = f(\xdiffs (t))
    ,\quad \vect{y}_n = \vect{V}_n^\top \vect{x}(t_n) + \vect{\epsilon}_n, 
    \\
    \text{where} \quad \xdiffs (t) &:= \xdiffsbig \label{eqn:D-def},
\end{align}
with known measurement matrices \(\vect{V}_n\) (taken to be the identity when the entire state is observed), observation times \(\vect{t}=(t_n)_{n=1}^N\), and measurement noise  \(\vect{\epsilon}_n\) which are assumed to be independent with zero mean and bounded variance. 
Given these observations, our goal is to estimate the nonlinear map \(f:\R^{dp} \mapsto \R^d\) from an instance 
of the $\vect{y}_n$, i.e., to find \(\widehat{f} \approx f\), and simultaneously infer the process $\vect{x}(t)$, i.e., find \(\widehat{\vect{x}}(t) \approx \vect{x}(t)\). Additionally, we ask that our representation of $\widehat{f}$ is \emph{sparse} in some basis or library of functions \(\dict= \left\{\phi_1,...,\phi_Q\right\}\). That is, we assume 
\begin{equation}
    \label{eqn:f_parameterization}
    \widehat{f}(\xdiffs (t))_m = \sum_{q=1}^Q \widehat{\theta}_{m,q} \phi_q\left(\xdiffs (t)\right), \quad \widehat{f}(\xdiffs) = (\widehat{f}(\xdiffs (t))_1,...,\widehat{f}(\xdiffs (t))_d), 
\end{equation}
where only a small number of \(\widehat{\theta}_{m,q}\) are nonzero in order to have an interpretable symbolic form, following 
the seminal work \cite{Brunton2016,wang2011,rudy2016datadriven,schaeffer2017learning}. As we do not know \textit{a priori} which features \(\phi_q\) are active, we parametrize $\widehat f$ using the full feature library 
and promote sparsity in the coefficients \(\widehat{\vect{\theta}} := (\widehat{\theta}_{m,q}), m=1,\cdots,d, q = 1,\cdots, Q\). 

Our setup poses a number of fundamental challenges: 
\begin{enumerate}

    \item Broadly speaking we would like to find $\widehat{f}$ such that 
    \begin{equation*}
         \widehat{f}(\xdiffs (t)) \approx \dtp\vect{x}(t).
    \end{equation*}
    However, the input to $\widehat{f}$ as well as the right hand side 
    of the above display depend on \(\vect{x}(t)\) and its derivatives, both of which are unknown and not observed directly. 

    \item Exacerbating this issue is the fact that the
    measurements $\vect{y}_n$ are sparse, indirect, and noisy, which 
    leads to further challenges in estimating $\vect{x}$ and its derivatives.    
\end{enumerate}

Estimating and dealing with the errors in the inputs to \(\widehat{f}\) and the target \(\dtp\vect{x}(t)\) is the fundamental property of system identification problems, which differentiates it from standard approximation or ML tasks. 
If \emph{all} of these variables were fully observed, i.e., we have direct access to the exact values of \(((D[\vect{x}](t_n))_{n=1}^N\), 
as well as unbiased estimates of \(\dtp\vect{x}(t_n)\), then this would reduce to a standard sparse regression problem. 

We overcome the above challenges by proposing 
 a joint estimation approach where 
 an approximation  \(\widehat{\vect{x}}(t)\) to the true 
 state \(\vect{x}(t)\) is obtained at the same time 
 as the approximation  \(\widehat{f}\) to the nonlinear dynamics.
 The key idea in our approach is to enforce self 
 consistency of the \(\widehat{\vect{x}}\) and \(\widehat{f}\), 
 i.e., \(\widehat{\vect{x}}\) should approximately satisfy the ODE defined by \(\widehat{f}\) while matching the observed data.
 Since the data is limited and the problem 
 is severely ill-posed, we propose an optimization
 approach to solve the problem with appropriate regularization terms for both the \(\vect{\theta}\) 
 parameters and the approximate state 
 \(\widehat{\vect{x}}\), in particular, we use a 
 sparse regularizer for the former and a 
 reproducing kernel Hilbert space (RKHS)
 penalty for the latter.

\subsection{Summary of contributions}
In view of SINDy and other existing works on equation discovery \cite{bayesiansindy,Brunton2016,Fung2025-odr,Chen2021,ijcai2021p283,Hokanson2023-sr}, we summarize our contributions as follows:
\begin{enumerate}
  \item We develop an RKHS regularized least-squares approach to learning general ODEs directly from sparse and partial observations. 
  (see Section~\ref{sec:continuous-system-id}).

  \item We establish a representer theorem that converts the infinite-dimensional problem into a finite-dimensional nonlinear least-squares system,  
  and design an alternating optimization algorithm based on the Levenberg–Marquardt (LM) method and sparsifying iterations for support selection  
  (see Sections~\ref{subsec:discretization} and~\ref{sec:implementation}).

  \item We demonstrate the flexibility and robustness of JSINDy across a broad range of problems, including scarce and partial observations, higher-order systems, and misspecified models,  
  demonstrating accurate recovery in various challenging data regimes  
  (see Section~\ref{sec:numerics}).
\end{enumerate}
Our formulation of the problem differs from previous works in two important ways. We allow for arbitrary order ODEs, while previous works 
are tailored to $p=1$, and we specifically consider partial, scarce, and 
scattered observations while previous works often assume access to high resolution full 
state observations, i.e., $\vect{y}_n = \vect{x}(t_n) + \vect{\epsilon}_n$ on a fine grid.

\subsection{Literature Review}
\label{subsec:lit-review}
Learning continuous time dynamical systems from time series data has been a prominent problem in system identification \cite{aastrom1971system, ljung2010perspectives, keesman2011system}. Seminal works using symbolic regression \cite{bongard2007automated,schmidt2009distilling} through genetic programming sought to discover physical laws from observational data. The works \cite{wang2011,Brunton2016,rudy2016datadriven,schaeffer2017learning} developed and popularized the use of a sparsity prior over a feature library to promote parsimony in the discovered model, drawing on ideas from compressive sensing \cite{donoho2006compressed,candes2006robust,candes2008intro}.
Bayesian approaches have also been considered, which allow for uncertainty quantification and motivate new strategies for model selection \cite{zhang2018robust,niven2019bayesian,yang2020bayesian,bayesiansindy, north2022bayesian, north2023bayesian}. 

Many of these works consider a two-step framework which first estimates time-derivatives, then performs regression to estimate the dynamics. These approaches are straightforward to implement and often succeed when measurements of the process are sufficiently dense but fail when the data is scarcely sampled and noisy. 
These approaches first solve a smoothing problem to estimate the state and its derivatives, using methods such as total variation regularization \cite{Chartrand2011-cv}, kernel regression \cite{LONG2024134095}, Savitzky–Golay filters, smoothing splines \cite{Varah1982-ay}, or Kalman smoothing \cite{haas2024kalman}.
These approaches inevitably trade bias for robustness: as shown in \cite{vanBreugel2020NumericalDifferentiation}, noise-resistant estimators lie on a Pareto frontier balancing error magnitude and error correlation due to implicitly assuming an inaccurate underlying dynamical model.
An alternative to estimating derivatives, but that still lies within a two-step framework is to adopt a weak-form formulation \cite{Messenger2021a}, which avoids explicit differentiation by matching integrals against test functions.
However, this still requires estimating integrals from noisy and scarcely sampled data, a problem that remains challenging despite being more well posed than differentiation.
Consequently, any two-step approach to learning dynamics is inherently limited by the absence of a proper dynamical model in its initial estimation stage.

To address limitations of two-step approaches, dynamics-informed approaches jointly estimate the dynamics and the state, allowing each to reinforce the other.
These include 
Physics-Informed Neural Networks-Sparse Regression (PINN-SR) \cite{Chen2021} and Physics-Informed Spline Learning (PiSL) \cite{ijcai2021p283} which use least squares collocation applied to neural networks and splines respectively, while obtaining sparsity in the coefficients of the ODE by alternating between applying variants of gradient descent and sequentially thresholded ridge regression. 
\cite{jalalian2025dataefficientkernelmethodslearning} takes a least squares collocation approach using RKHS methods, while also representing the unknown differential equation in an RKHS.
The ODR-BINDy method in \cite{Fung2025-odr} applies a least squares relaxation to a finite difference discretization, and handles feature selection by greedily maximizing a Bayesian evidence over models defined by feature sets. \cite{Ribera2022-nz} also applies a least squares relaxation to a finite difference discretization, but optimizes using a variational annealing approach, and focuses on identifying dynamics with hidden variables. Notably, our objective function \eqref{eqn:jsindy-opt} is closely related to that of \cite{Fung2025-odr,Ribera2022-nz,ijcai2021p283,Lahouel2024} and \cite{Chen2021}, though we apply different discretization, variable selection, and optimization strategies. 
\cite{Hokanson2023-sr} applies a finite difference discretization but imposes hard ODE constraints by solving the KKT system associated to the discretized problem. They promote sparsity via \(\ell_p\) regularization, and optimize using iteratively reweighted least squares. 

Aside from \cite{Hokanson2023-sr}, which considers decreasing sampling frequencies, and \cite{Ribera2022-nz}, which considers hidden variables, none of these works deal with very low sampling frequencies, partial observations through general linear measurements, and higher order dynamics in the context of ODEs. These issues are, however, considered heavily in \cite{Chen2021} and \cite{jalalian2025dataefficientkernelmethodslearning} in the context of PDEs, where partial information on the state (being infinite dimensional) and higher order derivatives are fundamental to the problem. \cite{Hokanson2023-sr} crucially point out that decoupling the level of discretization in their finite difference stencil from the sampling rate of the data is required for mitigating bias in estimating the ODE when the sampling rate of the data is low, and demonstrate this in a restricted setting on the Van der Pol system. We generally consider sampling rates over an order of magnitude lower than other works and flexibly handle partial observations and higher order ODEs. 

\subsection{Outline}
We derive JSINDy in \Cref{sec:continuous-system-id}, describe the algorithmic implementation in \Cref{sec:implementation}, give extensive numerical results in \Cref{sec:numerics}, and discuss the ODE relaxation in \Cref{sec:discussion}. 

\subsection{Notation}
We will use boldface for vectors, matrices, and vector-valued trajectories and use subscripts to denote indices, e.g., \(\vect{x}(t) = (x_1(t),x_2(t),...,x_d(t))\). We adopt semicolon notation for parametrized functions, such as \(\vect{x}(t;\vect{z})\), and use a dot \((\cdot)\) in place of an argument in order to treat an object as a function. For a function \(k\) of two arguments, and vectors \(\vect{t},\vect{\tau}\), we will write \(k(\vect{t},\vect{\tau})\) to mean the matrix with entries \(k(t_i,\tau_m)\). We will write \(\partial_{j}^{r}\) to denote \(r\)-th derivative with respect to the \(j\)-th argument. 

\section{Joint state estimation and system identification}
\label{sec:continuous-system-id}
In this section, we formulate our approach as an infinite dimensional optimization problem in the product of an RKHS with \(\mathbb{R}^{d \times Q}\). We discretize this objective via a quadrature rule and develop a representer theorem which reduces the optimization to a finite dimensional subspace, and prove existence of solutions to both the discretized and continuous problems. 

\subsection{The abstract formulation}

Our approach centers around the joint 
computation of the estimated state $\widehat{\vect{x}}(t)$, henceforth called \emph{state estimation}, as 
well as the right hand side $\widehat{f}$, 
henceforth called \emph{system identification}. 
We will require that the pair 
$\widehat{\vect{x}}$ and $\widehat{f}$ are \emph{self 
consistent}, in the sense that $\widehat{\vect{x}}$
should (approximately) satisfy the ODE defined by $\widehat{f}$:
\begin{equation*}
    \dtp\widehat{\vect{x}}(t) = \widehat{f}\left(\widehat{\vect{x}}(t), \frac{d}{dt} \widehat{\vect{x}}(t), \dots,\frac{d^{p-1}}{dt^{p-1}}\widehat{\vect{x}}(t)\right).
\end{equation*}
To this end, we impose self-consistency using a least squares penalty method \cite{tristan_analysis,tristan_pde,sasha_tristan_partialmin,vanLeeuwen2013Mitigating,xu2025efficientkernelbasedsolversnonlinear,Raissi2019-so}, which is discretized via collocation in an RKHS. 
Imposing that \(\widehat{f}\) has a sparse representation in the library \(\dict = \{\phi_1,\ldots,\phi_Q\}\), we adopt a parametrized notation for how it operates on $\vect x$ and its derivatives:
\begin{equation}
    f(\xdiffs (t);\vect{\theta}) := 
    \left(
        \sum_{q=1}^{Q}\theta_{1,q}\phi_q\left(\xdiffs (t)\right),
        \ldots,
        \sum_{q=1}^{Q}\theta_{d,q}\phi_q\left(\xdiffs (t)\right)
    \right), \quad \vect{\theta} \in \R^{d\times Q}.
\end{equation}
For instance, if we consider the first order constant coefficient linear ODE \(\frac{d}{dt} \vect{x}(t) = \vect{A} \vect{x}\),
then \(\xdiffs(t) = \vect{x}\), and we would be looking for \(\vect{\theta} = \vect{A}\). 

We define our estimators \((\widehat{\vect{x}},\widehat{\vect{\theta}})\) as solutions to the following optimization problem. 
\begin{equation}\label{eqn:jsindy-opt-sparse}
\begin{aligned}
\operatorname*{minimize}_{\vect{x}\in\X,\vect{\theta}\in\R^{d\times Q}}\quad 
    & \LL(\vect{x},\vect{\theta}) \\
\text{subject to}\quad 
    & \supp(\vect{\theta}) \subset \mathcal{S}\left(\vect{x}\right),\\
    \LL(\vect{x}, \vect{\theta}) 
        & := \alpha\, \LL_D(\vect{x}) 
        + \beta\, \LL_C(\vect{x},\vect{\theta})
        + \frac{\lambda}{2}\|\vect{x}\|_\X^2 
        + \frac{\mu}{2}\|\vect{\theta}\|_F^2, \\
    \LL_D(\vect{x}) 
        & := \frac{1}{2} \sum_{n=1}^N 
            \left|\vect{V}_n^\top\vect{x}(t_n) - \vect{y}_n\right|^2, \\
    \LL_C(\vect{x},\vect{\theta}) 
        & := \frac{1}{2}\int_{0}^{T}
            \left|\dtp\vect{x}(t) - f\left(\xdiffs(t);\vect{\theta}\right)\right|^2 dt.
\end{aligned}
\end{equation}

where
\begin{itemize}
    \item \(\alpha,\beta,\lambda,\mu >0\) are relative weights trading off data fidelity, self consistency through the ODE, and regularization of the trajectory estimate and coefficients. 
    \item \(\LL_D\) is a data misfit term
enforcing that the estimated trajectory \(\widehat{\vect{x}}\) is consistent with the observed data. 
    \item \(\LL_C\) is a least squares penalty enforcing that \(\widehat{\vect{x}}\) approximately satisfies the ODE 
induced by \(\widehat{f}\).
    \item \(\xdiffs (t)\), defined in \cref{eqn:D-def} expands the trajectory \(\vect{x}(t)\) into the states and derivatives.
    \item \(\X\) is a vector-valued RKHS used to represent the trajectories, whose elements are smooth enough such that \(\xdiffs (t)\) is well defined pointwise for all \(t\in [0,T]\), and \(\|\vect{x}\|_{\mathcal{X}}\) promotes regularity in the estimates of the trajectory. See the discussion below and \Cref{subsec:sobolev} for more technical details.
    
    \item \(\mathcal{S}\) is a feature selection procedure for choosing active terms from the feature library \(\Phi\) given a fixed trajectory estimate, and maps to a subset of indices. This can be seen as applying a traditional SINDy procedure, with an arbitrary choice of sparsifier, and using exact derivatives of the approximation of the trajectory in order to choose the subset of features which are active.
    
    Because we handle sparsity through the feature selection operator \(\mathcal{S}\) we do not need to apply regularization to \(\vect{\theta}\) to promote sparsity. In principle, however, the Frobenius norm \(\frac{\mu}{2}\|\vect{\theta}\|_F^2\) could be replaced by a more general sparsity promoting regularizer \(R(\vect{\theta})\) which would supplant \(\mathcal{S}\).
    \end{itemize}

The inclusion of a feature selection procedure is motivated by a natural algorithmic idea for sparse nonlinear optimization. 
Consider the fixed point iteration 
\begin{align}\label{eq:iteration-abstract}
\begin{aligned}
(\vect{x}^{(k+1)},\vect{\theta}^{(k+1)})
&\in
\left\{
\begin{aligned}
&\argmin_{\vect{x}\in\X,\vect{\theta}\in\R^{d\times Q}} 
    && \LL(\vect{x},\vect{\theta}) \\
&\text{subject to}\quad 
    && \supp(\vect{\theta}) \subset S^{(k)}
\end{aligned}
\right. \\
S^{(k+1)} &= \mathcal{S}\left(\vect{x}^{(k+1)}\right),
\end{aligned}
\end{align}
which alternates between optimizing on an active set of features, and choosing features to use in the active set. Given a solution \((\vect{x}^*,\vect{\theta}^*)\) to \eqref{eqn:jsindy-opt-sparse}, define \(S^* = \mathcal{S}(\vect{x}^*)\). If 
\begin{equation}
(\vect{x}^*,\vect{\theta}^*)\in
\operatorname*{arg\,min}_{\substack{\vect{x}\in\X,\ \vect{\theta}\in\R^{d\times Q}\\ \supp(\vect{\theta})\subset S^\star}}
\mathcal{L}(\vect{x},\vect{\theta}),
\end{equation}
then the pair \((\vect{x}^*,\vect{\theta}^*)\) is a fixed point of (\ref{eq:iteration-abstract}). This is a strong condition, but is likely to hold whenever \(\mathcal{S}\) is a reasonably good sparsifier that is consistent with the collocation loss \(\LL_C\). This is also a desirable property in itself; we want for our estimate fixed to the active set to be the best estimate possible on that active set.

The ADO algorithms of \cite{ijcai2021p283,Chen2021} are exactly of the form (\ref{eq:iteration-abstract}) and motivate it as a surrogate to \(\ell_0\) minimization. Similarly, the greedy evidence maximization in \cite{Fung2025-odr} is related in that it solves a sequence of fixed-support nonlinear least squares problems, though in each iteration, separate subproblems are solved for each candidate feature to remove. This has the benefit of minimizing a fixed objective function, and ensuring convergence due to the monotone feature removal process. This is discussed in detail algorithmically in \Cref{subsec:optimization} and \Cref{subsec:sparsifiers}.

In our theoretical analysis, we will remove the constraint and work with
\begin{equation}\label{eqn:jsindy-opt}
    \operatorname*{minimize}_{\vect{x}\in\X, \vect{\theta}\in\R^{d\times Q}}\quad 
    \LL(\vect{x},\vect{\theta}),
\end{equation}
which describes the first piece of the fixed point iteration and still entails solving a challenging nonlinear programming problem. 

This approach combines regularization on the trajectory estimate with a least squares relaxation of the differential equation, which prevents trajectory estimates from diverging away from the observed data. Further, this relaxation allows the observed data itself to guide the optimization in the early iterations, pulling our state estimates towards a neighborhood of the true trajectory even before our estimates effectively approximate the solution to the ODE. 

To represent the \(d\)-dimensional trajectory \(\vect{x}:[0,T]\to\mathbb{R}^d\), we work in a vector-valued RKHS \(\mathcal{X}\) of functions \([0,T] \mapsto \R^d\) generated by the \(d\)-fold Cartesian product of an RKHS \(\mathcal{H}_k\), associated to a symmetric positive definite kernel \(k\). This is equivalent to modeling each coordinate \(x_i(t)\) of the trajectory \(\vect{x}(t)\) separately in the RKHS \(\mathcal{H}_k\), or by taking \(\X\) to be the vector-valued RKHS with matrix valued kernel \(K(t,t') =  k(t,t')I_d\).
We will require that for derivative orders \(j = 0,...,p\), and times \(t \in [0,T]\), the derivative evaluation map \(\vect{x} \mapsto \frac{d^j}{dt^j}\vect{x}(t)\) is a bounded linear operator on \(\mathcal{X}\) (see \Cref{subsec:sobolev}).
It suffices to take \(k\) such that \(\mathcal{H}_k\) is compactly embedded in \(C^p([0,T])\), which is satisfied if \(k \in C^{p,p}([0,T]\times[0,T])\) \cite{ZhouDerivative}.

We choose the kernel $k$ to be the sum of a high-order Matérn kernel \cite{Rasmussen2006Gaussian} and a constant kernel to account for systems where the time-average is nonzero. More precisely, with $h_\nu$ denoting the  $\nu$-order Matérn covariance function and $\ell$ as a length scale parameter, we use kernels of the form
\begin{equation}\label{eqn:kernel}
    k(t,t') = c_0 + c_1 h_{\nu}\left(\frac{|t - t'|}{\ell}\right),\quad 
    h_{\nu}(r) = \frac{2^{1-\nu}}{\Gamma(\nu)}
\left(\sqrt{2\nu}r\right)^{\nu}
K_{\nu}\left(\sqrt{2\nu}\,r\right)
\end{equation}
where \(K_{\nu}\) is the modified Bessel function of the second kind, and \(c_0,c_1>0\). These kernels induce RKHSs which satisfy the boundedness condition of the derivative evaluation maps whenever \(\nu > p + \frac{1}{2}\). See \Cref{subsec:sobolev} for 
more details on kernels and smoothness.

As a preliminary step in our model, we choose the kernel hyperparameters $c_0,c_1,\ell$, and the variance of the observation noise $\sigma^2 = \mathbb{V}\left[\epsilon\right]$ using maximum 
likelihood and treat the distribution of \(\epsilon\) as Gaussian, as is common in 
Gaussian process regression; see for example \cite{Rasmussen2006Gaussian}. 
From there, \(\|\vect{x}\|_\X^2\) corresponds to the regularization induced by 
the RKHS norm associated to this fitted Gaussian process.

\subsection{Discretization}\label{subsec:discretization}

In order to obtain a computable objective function, we discretize the ODE residual $\LL_C(\vect{x},\vect{\theta})$ using a quadrature rule, setting
\begin{align}\label{eq:quadrature}\LL_C(\vect{x},\vect{\theta}) 
    &= \frac{1}{2}\int_{0}^{T}\left|\dtp\vect{x}(\tau) - f\left(\xdiffs (\tau);\vect{\theta}\right)\right|^2d\tau \\
    &\approx \frac{1}{2}\sum_{m=1}^{M} w_m \left|\dtp \vect{x}(\tau_m) - f\left(\xdiffs(\tau_m);\vect{\theta}\right)\right|^2 := \LL_C^{\vect{\tau}}(\vect{x},\vect{\theta})
\end{align}
where we refer to $\vect{\tau} = (\tau_m)_{m=1}^M$ as collocation points, and $\vect{w} = (w_m)_{m=1}^M$ as collocation weights. In all of our examples, we take uniformly spaced collocation nodes and uniform weights; this is discussed further in \Cref{sec:discussion}. Further, for notational convenience and without loss of generality, we assume that the observation times \(\vect{t} = (t_n)_{n=1}^N\) are a subset of the collocation points \(\vect{\tau}\), which can be enforced by augmenting \(\vect{\tau}\), and taking some \(w_m\) to be zero. 

Substituting \(\LL_C^{\vect{\tau}}\) for \(\LL_C\) in \cref{eqn:jsindy-opt}, 
we obtain the optimization problem
\begin{equation}\label{eqn:obj-semidiscrete}
\begin{aligned}
\underset{\vect{x}\in\X,\, \vect{\theta}\in \R^{d\times Q}}{\text{minimize}}
\LL^{\vect{\tau}}(\vect{x},\vect{\theta})&:=
\frac{\alpha}{2} \sum_{n=1}^N \left| \vect{V}_n^\top\vect{x}(t_n) - \vect{y}_n\right|^2 
\\
& + \frac{\beta}{2}\sum_{m=1}^{M} w_m
\left|\dtp \vect{x}(\tau_m) - f\left(\xdiffs(\tau_m);\vect{\theta}\right)\right|^2
\\
& + \frac{\lambda}{2}\|\vect{x}\|_\X^2 + \frac{\mu}{2}\|\vect{\theta}\|_F^2
\end{aligned}
\end{equation}
which we use to compute our estimators. At this point, the reproducing property and boundedness of derivative evaluations in the RKHS \(\X\) give a natural parametrization of \(\widehat{\vect x}\) which is justified by the following \emph{representer theorem}, which converts optimization over the infinite-dimensional space \(\X\) into a problem on a finite-dimensional subspace. This result is more general than the typical representer theorems often found in ML, such as in \cite{Rasmussen2006Gaussian}, which only make use of pointwise evaluations, but is very similar to the representer theorem given in \cite{CHEN2021110668}, and follows from the more general results of \cite{Owhadi2019-xh}. 
The purpose of this result is to reduce the optimization problem over the infinite-dimensional space \(\X\) into an optimization problem over the finite-dimensional subspace spanned by an explicit set of basis functions. 

\begin{theorem}\label{thm:obj-representer}
    Consider the optimization problem (\ref{eqn:obj-semidiscrete}) for \(\alpha,\beta,\lambda,\mu>0\). 
    Let \(\X\) be the vector-valued RKHS associated to the \(d\)-fold Cartesian product of an RKHS \(\mathcal{H}_k\) associated to the positive definite kernel \(k\). Assume that linear operators of the form \(\vect{x} \mapsto \frac{d^r}{dt^r}\vect{x}(\tau)\) for \(r=0,...,p\) and \(\tau\in [0,T]\) are bounded in the RKHS \(\X\). Also, assume that all functions \(\phi_q\) in the library \(\dict\) are continuous on \(\mathbb{R}^{dp}\).

\begin{enumerate}
    \item There exists a minimizer \((\widehat{\vect{x}},\widehat{\vect{\theta}})\) to the optimization problem
        \begin{equation*}
        \underset{\vect{x}\in\X,\,\vect{\theta}\in\R^{d \times Q}}{\mathrm{minimize}}\LL^{\vect{\tau}}(\vect{x},\vect{\theta}). 
        \end{equation*}

    \item Define the basis functions
\begin{align}\label{eqn:dtau-def}
\kbasis(t)
&:= \big(k(t,\tau_1),\partial_{2}k(t,\tau_1),\dots,\partial_{2}^{p}k(t,\tau_1),\dots,
k(t,\tau_M),\dots,\partial_{2}^{p}k(t,\tau_M)\big)\in\R^{(p+1)M},
\end{align}
and for \(\vect{z}\in\R^{(p+1)M\times d}\), set
\begin{equation}\label{eqn:zmatvec}
\vect{x}(t;\vect{z}) := \kbasis(t)^{\!\top}\vect{z}.
\end{equation}

Then problem \eqref{eqn:obj-semidiscrete} is equivalent to
\begin{equation}\label{eqn:z-obj}
\underset{\vect{z}\in\R^{(p+1)M\times d},\,\vect{\theta}\in\R^{d\times Q}}{\mathrm{minimize}}
\LL^{\vect{\tau}}\!\big(\vect{x}(\cdot;\vect{z}),\vect{\theta}\big),
\end{equation}
 in the sense that they have the same optimal value and \((\widehat{\vect{x}},\widehat{\vect{\theta}})\) is a minimizer if and only if \(\widehat{\vect{x}}\) admits a representation \(\widehat{\vect{x}}(t)=\kbasis(t)^{\!\top}\widehat{\vect{z}}\) for some \(\widehat{\vect{z}}\), where \((\widehat{\vect{z}},\widehat{\vect{\theta}})\) is a solution to \eqref{eqn:z-obj}.

\item For all \(\vect{z}\in\R^{(p+1)M\times d}\),
\begin{equation}\label{eqn:rkhs-norm}
\|\vect{x}(\cdot;\vect{z})\|_{\X}^{2}=\tr\!\big(\vect{z}^{\!\top}\vect{K}\,\vect{z}\big), \quad 
\vect{K} =
\begin{bmatrix}
k(\vect{\tau},\vect{\tau}) & \partial_{2}^{1}k(\vect{\tau},\vect{\tau}) & \cdots & \partial_{2}^{p}k(\vect{\tau},\vect{\tau})\\
\partial_{1}^{1}k(\vect{\tau},\vect{\tau}) & \partial_{1}^{1}\partial_{2}^{1}k(\vect{\tau},\vect{\tau}) & \cdots & \partial_{1}^{1}\partial_{2}^{p}k(\vect{\tau},\vect{\tau})\\
\vdots & \vdots & \ddots & \vdots\\
\partial_{1}^{p}k(\vect{\tau},\vect{\tau}) & \partial_{1}^{p}\partial_{2}^{1}k(\vect{\tau},\vect{\tau}) & \cdots & \partial_{1}^{p}\partial_{2}^{p}k(\vect{\tau},\vect{\tau}). 
\end{bmatrix} \in \R^{(p+1)M\times (p+1)M}.
\end{equation}

\end{enumerate}

\end{theorem}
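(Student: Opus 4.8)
The plan is to exploit the fact that the discretized objective $\LL^{\vect{\tau}}$ depends on the trajectory $\vect{x}\in\X$ only through a finite collection of bounded linear functionals. Concretely, $\LL_D$ sees $\vect{x}$ only through the evaluations $\vect{x}(t_n)$, and since $\vect{t}\subset\vect{\tau}$ by assumption, these are among the values $\tfrac{d^r}{dt^r}\vect{x}(\tau_m)$ with $r=0$; the collocation term $\LL_C^{\vect{\tau}}$ sees $\vect{x}$ only through $\dtp\vect{x}(\tau_m)$ and through $\xdiffs(\tau_m)$, i.e. the derivatives of orders $0,\dots,p-1$ at the nodes. Hence the loss factors through the functionals $L_{m,r,j}(\vect{x})=\tfrac{d^r}{dt^r}x_j(\tau_m)$ for $m=1,\dots,M$, $r=0,\dots,p$, $j=1,\dots,d$, which are bounded on $\X$ by hypothesis. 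Working coordinatewise in the Cartesian-product RKHS, the Riesz representer of $g\mapsto\tfrac{d^r}{dt^r}g(\tau_m)$ in $\mathcal{H}_k$ is the derivative reproducing element $\partial_2^r k(\cdot,\tau_m)$; I would first record this derivative reproducing property, which is exactly what the boundedness assumption (via \cite{ZhouDerivative}) guarantees.

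With the representers in hand, let $V\subset\X$ be the finite-dimensional, hence closed, subspace spanned coordinatewise by $\{\partial_2^r k(\cdot,\tau_m): r=0,\dots,p,\ m=1,\dots,M\}$, and write $\X=V\oplus V^\perp$. For any $\vect{x}=\vect{x}_V+\vect{x}_{V^\perp}$, every $L_{m,r,j}$ annihilates $\vect{x}_{V^\perp}$, so the data and collocation terms depend on $\vect{x}_V$ alone, while $\|\vect{x}\|_\X^2=\|\vect{x}_V\|_\X^2+\|\vect{x}_{V^\perp}\|_\X^2$. Since $\lambda>0$, projecting onto $V$ strictly decreases the objective unless $\vect{x}_{V^\perp}=0$; this simultaneously shows $\inf_\X\LL^{\vect{\tau}}=\inf_V\LL^{\vect{\tau}}$ and that every minimizer over $\X$ lies in $V$. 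For part (1), I would then observe that on the finite-dimensional Hilbert space $V$ the restricted objective is continuous and coercive: $\tfrac{\lambda}{2}\|\vect{x}\|_\X^2$ is coercive on $V$, $\tfrac{\mu}{2}\|\vect{\theta}\|_F^2$ is coercive in $\vect{\theta}$, all loss terms are nonnegative, and continuity of the collocation term follows from the assumed continuity of the library functions $\phi_q$ composed with the (continuous) derivative-evaluation maps. Weierstrass then yields a minimizer over $V\times\R^{d\times Q}$, which by the decomposition is a minimizer over $\X\times\R^{d\times Q}$; working in $V$ directly avoids having to assume $\vect{K}$ is invertible.

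For part (2), the parametrization $\vect{z}\mapsto\vect{x}(\cdot;\vect{z})=\kbasis(\cdot)^\top\vect{z}$ maps $\R^{(p+1)M\times d}$ onto $V$, so minimizing over $V$ coincides with minimizing over $\vect{z}$; combined with the previous paragraph this gives equal optimal values and the stated correspondence of minimizers, including that every $\widehat{\vect{x}}$ minimizing over $\X$ admits a representation $\kbasis(\cdot)^\top\widehat{\vect{z}}$. For part (3), I would compute $\|\vect{x}(\cdot;\vect{z})\|_\X^2=\sum_{j=1}^d\|x_j\|_{\mathcal{H}_k}^2$ and expand each coordinate bilinearly: the Gram entries $\langle\partial_2^r k(\cdot,\tau_i),\partial_2^s k(\cdot,\tau_m)\rangle_{\mathcal{H}_k}=\partial_1^r\partial_2^s k(\tau_i,\tau_m)$, obtained by applying the derivative reproducing property in both arguments, assemble into the block matrix $\vect{K}$, and summing $\vect{z}_j^\top\vect{K}\,\vect{z}_j$ over coordinates gives $\tr(\vect{z}^\top\vect{K}\,\vect{z})$. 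The main obstacle is the careful justification of the derivative reproducing property and the resulting Gram entries—ensuring that differentiation in each argument commutes with the inner product under the stated smoothness and boundedness hypotheses—together with keeping the bookkeeping of the vector-valued Cartesian-product structure straight; everything else is a standard orthogonal-projection representer argument.
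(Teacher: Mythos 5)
Your proof is correct, and its mathematical core coincides with the paper's: the discretized objective touches \(\vect{x}\) only through the derivative-evaluation functionals at the collocation nodes, whose Riesz representers \(\partial_2^r k(\cdot,\tau_m)\) come from the derivative reproducing property of \cite{ZhouDerivative}, and the Gram identity \(\langle \partial_2^r k(\cdot,\tau_i),\partial_2^s k(\cdot,\tau_m)\rangle_{\mathcal{H}_k}=\partial_1^r\partial_2^s k(\tau_i,\tau_m)\) assembles \(\vect{K}\) and the trace formula exactly as in the paper. Where you genuinely diverge is the logical scaffolding around parts (1) and (2). The paper invokes its abstract representer theorem (Theorem~\ref{thm:rep}), which only asserts that a minimizer, \emph{if one exists}, lies in the span; it then proves existence in the \(\vect{z}\)-coordinates, where coercivity of \(\tr(\vect{z}^\top\vect{K}\vect{z})\) can fail when \(\vect{K}\) is singular, forcing the extra step of restricting \(\vect{z}\) to the range of \(\vect{K}\), and finally transfers existence back through the claimed equivalence. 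You instead run the orthogonal-projection argument explicitly: with \(V\) the coordinatewise span and \(\X=V\oplus V^\perp\), the data and collocation terms annihilate \(V^\perp\) while \(\lambda>0\) makes projection onto \(V\) strictly decrease the objective, so \(\inf_{\X}=\inf_{V}\) holds \emph{before} any existence is known, and every minimizer must lie in \(V\). Existence then follows from Weierstrass on the finite-dimensional space \(V\times\R^{d\times Q}\), where \(\tfrac{\lambda}{2}\|\vect{x}\|_{\X}^2\) is automatically coercive because the RKHS norm is a genuine norm on \(V\); no invertibility of \(\vect{K}\) and no range restriction are needed, and the possible singularity of \(\vect{K}\) reappears only as harmless non-injectivity of \(\vect{z}\mapsto\vect{x}(\cdot;\vect{z})\), surjectivity onto \(V\) being all that the equivalence requires. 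This buys you two things: the equal-optimal-values claim in part (2) no longer presupposes a minimizer (the paper's transfer of existence leans on this equivalence, whose proof via Theorem~\ref{thm:rep} tacitly assumes one, though the gap is easily repaired since the projection argument is precisely the proof of that theorem), and the singular-\(\vect{K}\) bookkeeping disappears entirely. The paper's route, in exchange, is shorter on the page and reuses a quotable off-the-shelf result.
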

\begin{proof}
We have that
\begin{equation}
    \LL^{\vect{\tau}}(\vect{x},\vect{\theta}) = \alpha \LL_D(\vect{x}) + \beta \LL_C^{\vect{\tau}}(\vect{x},\vect{\theta}) + \frac{\lambda}{2}\|\vect{x}\|_\X^2 + \frac{\mu}{2}\|\vect{\theta}\|_F^2.
\end{equation}
By nesting, problem \eqref{eqn:obj-semidiscrete} is equivalent to 
\begin{equation}
\underset{\vect{\theta}\in\R^{d\times Q}}{\mathrm{minimize}}\inf _{\vect{x}\in\X}\LL^{\vect{\tau}}(\vect{x},\vect{\theta})
\end{equation}
and we can thus focus solely on the dependence with respect to \(\vect{x}\), studying the inner variational problem, as the outer problem is already finite dimensional. 
Going term by term, we have 
\begin{equation}
    \LL_D(\vect{x}) = \frac{1}{2} \sum_{n=1}^N \left| \vect{V}_n^\top \vect{x}(t_n) - \vect{y}_n\right|^2,
\end{equation}
which depends only on point evaluations of the form \(\vect{x}(\tau_m)\) (recall that we assume the observation times \(\vect{t}\) are a subset of \(\vect{\tau}\)). 
Next, we have that
\begin{equation}
    \LL_C^{\vect{\tau}}(\vect{x},\vect{\theta}) = \frac{1}{2}\sum_{m=1}^{M} w_m \left|\dtp \vect{x}(\tau_m) - f\left(\xdiffs(\tau_m);\vect{\theta}\right)\right|^2,
\end{equation}
depends on \(\vect{x}\) only through point evaluations of \(\vect{x}\) and its derivatives at \((\tau_m)_{m=1}^M\), i.e., the values of \(\xdiffs(\tau_m)\) and \(\frac{d^p}{dt^{p}}\vect{x}(\tau_m)\). 

Recall that linear operators of the form \(\vect{x} \mapsto \frac{d^r}{dt^r}\vect{x}(\tau)\) for \(r=0,...,p\) and \(\tau\in [0,T]\) are assumed to be bounded on the RKHS \(\X\). Thus, by the reproducing property \cite{ZhouDerivative}, 
\begin{equation}
    \frac{d^r}{dt^r}x_i(\tau_m) = \left\langle \partial_2^rk(\cdot ,\tau_m),x_i(\cdot)\right\rangle_{\mathcal{H}_k}. 
\end{equation}
Since the objective \(\LL^{\vect{\tau}}\) depends only on the values of \(\frac{d^r}{dt^r}\vect{x}(\tau_m)\) for \(r = 0,1,...,p\), \(m = 1,...,M\), with the addition of the squared RKHS norm, this implies that 
\begin{equation}
    \label{eqn:trajectory-component}
    \widehat{x}_i \in \operatorname{span}\Big\{ \partial_{2}^{r} k(\,\cdot\,,\tau_m) : m=1,\dots,M, r=0,\dots,p \Big\},
\end{equation}
by \Cref{thm:rep}. 
This is precisely the subspace spanned by the basis \(\kbasis(\cdot)\) in \eqref{eqn:dtau-def}. 
By the reproducing property, we also have that 
\begin{equation}
    \left\langle \partial_2^r k(\cdot ,\tau_m),\partial_2^s k(\cdot ,\tau_{m'})\right\rangle_{\mathcal{H}_k} = \partial_1^r \partial_2^sk(\tau_m,\tau_{m'}),
\end{equation}
which gives the structure of the matrix \(\vect{K}\), 
and the RKHS norm of an individual component of the trajectory \(x_i(\cdot) \in \Span \left\{\partial_2^rk(\cdot ,\tau_m),\,\,r = 0,1,...,p, m = 1,...,M
    \right\}\). 
Stacking the individual components together and noting that 
\begin{equation}
    \vect{x}(t) = (x_1(t),...,x_d(t)) \text{ and } \|\vect{x}\|_\X^2 = \sum_{i=1}^d \|x_i\|_{\mathcal{H}_k}^2
\end{equation} 
produces the compact representation in \eqref{eqn:zmatvec} and the formula for the RKHS norm in \eqref{eqn:rkhs-norm}. 

We obtain existence of minimizers from continuity and coercivity. The objective is continuous with respect to \(\vect{z}\) and \(\vect{\theta}\) due to the continuity of the feature maps \(\phi_q\).  If \(\vect{K}\) is invertible, then the objective is coercive and thus attains a minimum. If it is not invertible, then we may, without affecting \(\LL_D\) or \(\LL_{C}^{\vect{\tau}}\), restrict \(\vect{z}\) to the range of \(\vect{K}\); this is essentially applying a finite-dimensional representer theorem to the subspace spanned by \(\kbasis\). Under this additional constraint, the objective becomes coercive, giving existence of minima. Finally, the equivalence of the optimization problems allows us to transfer existence from problem \ref{eqn:z-obj} to problem \ref{eqn:obj-semidiscrete}. 
\end{proof}

A direct application of \Cref{thm:obj-representer} gives the finite-dimensional objective function 
\begin{equation}\label{eqn:final-obj}
    \begin{aligned}
\Ldiscrete(\vect{z},\vect{\theta})
:=\LL^{\vect{\tau}}(\vect{x}(\cdot;\vect{z}),\vect{\theta})=
\frac{\alpha}{2}& \sum_{n=1}^N \left| \vect{V}_n^\top \vect{x}(t_n;\vect{z}) - \vect{y}_n\right|^2 
\\
 + \frac{\beta}{2}&\sum_{m=1}^{M} w_m
\left|\dtp \vect{x}(\tau_m;\vect{z}) - f\left(\xdiffs(\tau_m;\vect{z})\right)\right|^2
\\
 + \frac{\lambda}{2}&\|\vect{z}\|_{\vect{K}}^2 + \frac{\mu}{2}\|\vect{\theta}\|_F^2 ,\,  \|\vect{z}\|_{\vect{K}}^2 := \zkz
\end{aligned}
\end{equation}

We also give a theorem on the existence of solutions to the continuous problem \cref{eqn:jsindy-opt}.

\begin{theorem}\label{thm:cont-existence}
    Consider the optimization problem (\ref{eqn:jsindy-opt}). 
    Let \(\alpha,\beta,\lambda,\mu>0\).
    Let \(\X\) be the vector-valued RKHS associated to the \(d\)-fold Cartesian product of an RKHS \(\mathcal{H}_k\) associated to the positive definite kernel \(k\). Assume that \(\mathcal{H}_k\) is compactly embedded in \(C^p([0,T],\mathbb{R}^d)\). Assume that all functions \(\phi_q\) in the library \(\dict\) are continuous on \(\mathbb{R}^{dp}\). Then there exists a minimizer \((\vect{x}^*,\vect{\theta}^*) \in \mathcal{X} \times \mathbb{R}^{d\times Q}\).
\end{theorem}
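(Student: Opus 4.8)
The plan is to apply the direct method of the calculus of variations. Since $\alpha,\beta,\lambda,\mu>0$ and every summand of $\LL$ is nonnegative, the infimum is finite (bounded above by $\LL(\vect{0},\vect{0})<\infty$), so a minimizing sequence $(\vect{x}_n,\vect{\theta}_n)$ exists. First I would extract boundedness of this sequence: because $\LL_D,\LL_C\ge 0$, we have $\tfrac{\lambda}{2}\|\vect{x}_n\|_{\X}^2+\tfrac{\mu}{2}\|\vect{\theta}_n\|_F^2\le \LL(\vect{x}_n,\vect{\theta}_n)$, and the right-hand side is bounded along a minimizing sequence. Hence $(\vect{x}_n)$ is bounded in the Hilbert space $\X$ and $(\vect{\theta}_n)$ is bounded in the finite-dimensional space $\R^{d\times Q}$.

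Next I would pass to a subsequence (not relabeled) along which $\vect{x}_n\rightharpoonup\vect{x}^*$ weakly in $\X$, using reflexivity of the Hilbert space, and $\vect{\theta}_n\to\vect{\theta}^*$ in $\R^{d\times Q}$, using finite-dimensional compactness. The key analytic step is to upgrade the weak convergence in $\X$ to strong convergence in $C^p([0,T],\R^d)$: since the embedding $\mathcal{H}_k\hookrightarrow C^p([0,T])$ is compact by hypothesis (hence $\X\hookrightarrow C^p([0,T],\R^d)$ is compact), and compact operators map weakly convergent sequences to strongly convergent ones, we obtain $\vect{x}_n\to\vect{x}^*$ in $C^p$. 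In particular $\vect{x}_n$ and all of its derivatives up to order $p$ converge uniformly on $[0,T]$, so in the ODE residual the top-order term $\dtp\vect{x}_n\to\dtp\vect{x}^*$ and the lower-order terms making up $\xdiffs$ all converge uniformly.

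With these convergences in hand I would pass to the limit term by term. The data misfit $\LL_D$ depends only on finitely many point evaluations, which are continuous under uniform convergence, so $\LL_D(\vect{x}_n)\to\LL_D(\vect{x}^*)$. For the collocation term, the uniform convergence of $\vect{x}_n$ and its first $p-1$ derivatives confines the arguments fed to the library (the stacked states and derivatives $\xdiffs$) to a compact subset of $\R^{dp}$; on this set each continuous $\phi_q$ is uniformly continuous, so the residual integrand evaluated along $(\vect{x}_n,\vect{\theta}_n)$ converges uniformly to its value at $(\vect{x}^*,\vect{\theta}^*)$ (using also $\vect{\theta}_n\to\vect{\theta}^*$). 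Together with $\dtp\vect{x}_n\to\dtp\vect{x}^*$ uniformly, this gives $\LL_C(\vect{x}_n,\vect{\theta}_n)\to\LL_C(\vect{x}^*,\vect{\theta}^*)$. The coefficient penalty satisfies $\tfrac{\mu}{2}\|\vect{\theta}_n\|_F^2\to\tfrac{\mu}{2}\|\vect{\theta}^*\|_F^2$ by continuity, while the RKHS penalty is weakly lower semicontinuous, giving $\|\vect{x}^*\|_\X^2\le\liminf_n\|\vect{x}_n\|_\X^2$. Adding the four contributions yields
\[
\LL(\vect{x}^*,\vect{\theta}^*)\le\liminf_n\LL(\vect{x}_n,\vect{\theta}_n)=\inf_{\vect{x},\vect{\theta}}\LL,
\]
so $(\vect{x}^*,\vect{\theta}^*)$ attains the minimum.

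I expect the main obstacle to be the nonlinear collocation term $\LL_C$: unlike the quadratic penalties and the finitely supported data term, it couples the (only weakly convergent) trajectory with the library through the possibly nonlinear maps $\phi_q$ and involves the $p$-th derivative. The argument hinges entirely on converting weak $\X$-convergence into strong $C^p$-convergence via the compact embedding, since mere weak convergence would control neither the nonlinear composition nor the highest-order derivative in the residual. Once uniform convergence of all derivatives up to order $p$ is secured, the uniform continuity of the $\phi_q$ on compact sets makes the remaining limit passage routine.
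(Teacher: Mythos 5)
Your proof is correct and takes essentially the same route as the paper's: the direct method, using coercivity of the quadratic regularizers to bound a minimizing sequence, the compact embedding \(\mathcal{H}_k \hookrightarrow C^p\) to upgrade to strong \(C^p([0,T],\mathbb{R}^d)\) convergence, and continuity/lower semicontinuity of the terms of \(\LL\) to pass to the limit. If anything, your write-up is more careful than the paper's sketch on two points it leaves implicit — why the limit trajectory actually lies in \(\X\), and why the RKHS penalty survives the limit — both of which you settle cleanly by first extracting a weakly convergent subsequence in the Hilbert space \(\X\) and invoking weak lower semicontinuity of its norm.
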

\begin{proof}
    Because the features \(\phi_q\) are assumed to be continuous functions, and point evaluations of derivatives up to order \(p\) are bounded on \(\mathcal{H}_k\) due to the compact embedding property, \(\LL_C\) and \(\LL_D\) are continuous on both \(\mathcal{X} \times \mathbb{R}^{d\times Q}\) and \(C^p([0,T],\mathbb{R}^d) \times \mathbb{R}^{d\times Q}\). Additionally, both are lower bounded by zero. 
    Since \(\frac{\lambda}{2}\|\vect{x} \|_{\mathcal{X}}^2 + \frac{\mu}{2}\|\vect{\theta}\|_F^2\) is coercive on \(\mathcal{X} \times \mathbb{R}^{d\times Q}\), we obtain that \(\LL\) is coercive as well. 
    For a minimizing sequence \((\vect{x}^{(k)},\vect{\theta}^{(k)})\), we may extract a subsequence that strongly converges in \(C^p([0,T],\mathbb{R}^d) \times \mathbb{R}^{d\times Q}\) to a candidate minimizer \((\vect{x}^*,\vect{\theta}^*)\). Lower semicontinuity of \(\LL\) yields that \(\lim_{k\to\infty}\LL(\vect{x}^{(k)},\vect{\theta}^{(k)}) = \LL(\vect{x}^*,\vect{\theta}^*)\) giving existence of a minimizer. 
\end{proof}

\section{Algorithms and implementation details}
\label{sec:implementation}

Reintroducing the sparse selection procedure, we obtain the optimization problem 
\begin{equation}\label{eqn:final-opt}
    \begin{aligned}
\operatorname*{minimize}_{\vect{z}\in\R^{(p+1)M\times d}, \vect{\theta}\in\R^{d\times Q}}\quad 
    & \Ldiscrete(\vect{z},\vect{\theta}) \\
\text{subject to}\quad 
    & \supp(\vect{\theta}) \subset \mathcal{S}\left(\vect{x}\right).
\end{aligned}
\end{equation}

In order to numerically compute solutions to \cref{eqn:final-opt}, we alternate between applying a Levenberg-Marquardt (LM) algorithm to solve smooth nonlinear least squares problems with the support of \(\vect{\theta}\) restricted, and a sparse regression algorithm to choose the support of \(\vect{\theta}\). We first write \(\Ldiscrete\) as a regularized nonlinear least squares objective. 
Define the residual functions
\begin{equation}
    \vect{F}_D(\vect{z}) := 
    \begin{bmatrix}
        \vect{V}_{1}^\top \vect{x}(t_{1};\vect{z})-\vect{y}_{1}\\
        \vdots\\
        \vect{V}_{N}^\top\vect{x}(t_{N};\vect{z})-\vect{y}_{N}
    \end{bmatrix}, \quad 
    \vect{F}_{C}(\vect{z},\vect{\theta})
    =
    \begin{bmatrix}
    \sqrt{w_{1}}\big(\,\dtp \vect{x}(\tau_{1};\vect{z}) - f\big(\xdiffs(\tau_{1};\vect{z});\,\vect{\theta}\big)\big)\\[2pt]
    \vdots\\[2pt]
    \sqrt{w_{M}}\big(\,\dtp \vect{x}(\tau_{M};\vect{z}) - f\big(\xdiffs(\tau_{M};\vect{z});\,\vect{\theta}\big)\big).
    \end{bmatrix}
    \end{equation}
With 
\begin{equation}
    \vect{F}(\vect{z},\vect{\theta}) = \begin{bmatrix}\sqrt{\alpha}\vect{F}_{D}(\vect{z})\\
        \sqrt{\beta}\vect{F}_{C}(\vect{z},\vect{\theta})
        \end{bmatrix},
\end{equation}
observe that
\begin{equation}\label{eqn:nls_obj}
    \Ldiscrete(\vect{z},\vect{\theta}) = 
    \frac{1}{2} \left\|\vect{F}(\vect{z},\vect{\theta})\right\|^2
    + \frac{\lambda}{2} \tr(\vect{z}^\top \vect{K}\vect{z}) + \frac{\mu}{2}\|\vect{\theta}\|_F^2.
\end{equation}
The addition of the quadratic regularization term \(\frac{1}{2} \vect{z}^T \vect{K} \vect{z}\) is completely benign in the nonlinear least squares setting, and may be treated explicitly in Gauss-Newton Hessian approximations. We further perform a change of variables using the Cholesky factors of \(\vect{K}\) in order to transform the quadratic regularization into a multiple of the identity, see \Cref{subsec:cholesky}. 
Additionally, note that \(\vect{F}(\vect{z},\vect{\theta})\) is affine with respect to \(\vect{\theta}\), which makes it natural to apply methods for sparse linear least squares for the feature selection procedure \(\mathcal{S}\); see \Cref{subsec:sparsifiers}. 

\subsection{Optimization}\label{subsec:optimization}
We compute minimizers of \eqref{eqn:z-obj} by alternating between \emph{active-set iterations}, and \emph{sparsifying iterations}.

\begin{enumerate}
    \item \textbf{Active-set iterations} refine the estimates of $\vect{z},\vect{\theta}$ under the condition that $\supp(\vect{\theta})\subset S^{(k)}$ for a fixed active set $S^{(k)}$:
    \begin{equation}\label{eq:active-set-step}
        (\vect{z}^{(k+1)},\vect{\theta}^{(k+1)})
        =
        \argmin_{\vect{z},\vect{\theta}} 
            \Ldiscrete(\vect{z},\vect{\theta})  \text{, subject to } \,
             \supp(\vect{\theta}) \subset S^{(k)}
    \end{equation}
    These iterations drop the features not selected in step (2) and solve smooth nonlinear least-squares problems restricted to the identified support using LM. Taking previous parameter values as a warm start greatly improves efficiency. More algorithmic details are discussed in \Cref{subsec:smoothNLS}.

    \item \textbf{Sparsifying iterations} freeze $\vect{z}=\vect{z}^{(k)}$ and compute
    \begin{equation}\label{eq:sparsify-step}
        S^{(k+1)} =  \mathcal{S}(\vect{x}(\cdot;\vect{z}^{(k)}))
    \end{equation}
    in order to estimate a new active set. Once the estimate \(\vect{x}(\cdot,\vect{z})\) of the trajectory is fixed, a natural strategy is to target the sparse regression problem of finding sparse \(\vect{\theta}\) which approximately minimizes the collocation error \(\LL_C^\tau(\vect{x}(\cdot;\vect{z}^{(k)}),\vect{\theta})\). This objective then has a \emph{linear} least squares structure, and \(S^{(k+1)}\) is taken to be the support of the identified solution. 
    This approach can be seen as applying a traditional 2-step SINDy approach to the current trajectory estimate $ \vect{x}(\tau_m;\vect{z}^{(k)})$, using the \emph{exact} derivatives of the \emph{estimated} trajectory.
\end{enumerate}

We initialize with a dense support by setting \(S^{(0)}\) to be all available indices of \(\vect{\theta}\), and alternate these two steps until the support $S^{(k)}$ stabilizes, i.e. \(S^{(k)} = S^{(k-1)}\), which corresponds to a fixed point.
In most of our examples, we take \(\mathcal{S}\) to be the result of sequentially thresholded ridge regression \cite{Brunton2016}, which is also the choice taken in \cite{ijcai2021p283,Chen2021} and makes our algorithm template match their ADO approach. Options for sparsifiers are discussed further in \Cref{subsec:sparsifiers}.

\subsection{Smooth nonlinear least squares}\label{subsec:smoothNLS}
To solve the smooth subproblems arising in the active-set refinement \cref{eq:active-set-step}, we use an LM method, which is a standard and well studied algorithm for nonlinear least squares \cite{levenberg,marquardt,Aravkin2024-bz,wright1999numerical}. For notational convenience, stack the variables \(\vect{z},\vect{\theta}\) and restrict to the current support:
\begin{equation}
    \vect{\eta}=
\begin{bmatrix}
\mathrm{vec}(\vect{z})\\
\mathrm{vec}(\vect{\theta}_{S^{(k)}})
\end{bmatrix},
\qquad
\vect{G}(\vect{\eta})=\vect{F}(\vect{z},\vect{\theta}).
\end{equation}
The subproblem in \cref{eq:active-set-step} takes the form
\begin{equation}
    \min_{\vect{\eta}} \Ldiscrete(\vect{\eta}):=\frac{1}{2}\big\|\vect{G}(\vect{\eta})\big\|^2 +\frac{1}{2}\vect{\eta}^\top \vect{Q}\vect{\eta},
\end{equation}
where we define the positive semi-definite matrix
\[
\vect{Q}
:=
\begin{bmatrix}
\lambda\,(I_d \otimes \vect{K}) & 0\\[2pt]
0 & \mu\, I_{|S^{(k)}|}
\end{bmatrix}, \; \text{ so that }
\frac{1}{2}\,\vect{\eta}^\top \vect{Q}\,\vect{\eta}
= \frac{\lambda}{2}\,\tr(\vect{z}^\top \vect{K}\vect{z}) + \frac{\mu}{2}\,\|\vect{\theta}\|_F^2.
\]
Note that we have dropped rows/columns for inactive coefficients, i.e., we only keep \(\vect{\theta}_{S^{(k)}}\).

Let \(\vect{M}\) be a positive definite matrix defining a step metric. In our case, we set \(\vect{M} = \vect{Q}\). LM linearizes inside of the least-squares term and adds a quadratic damping \(\gamma^{(k)}\) to control step size and force convergence. The iteration is given by
\begin{align}
\vect{\eta}^{(k+1)}&=\arg\min_{\vect{\eta}}\; m^{(k)}(\vect{\eta})+\frac{\gamma^{(k)}}{2}(\vect{\eta}-\vect{\eta}^{(k)})^\top \vect{M}(\vect{\eta}-\vect{\eta}^{(k)}),\\
m^{(k)}(\vect{\eta})&:=\frac{1}{2}\left\|\nabla \vect{G}(\vect{\eta}^{(k)})(\vect{\eta}-\vect{\eta}^{(k)})+\vect{G}(\vect{\eta}^{(k)})\right\|^2+\frac{1}{2}\vect{\eta}^\top \vect{Q}\vect{\eta}.
\end{align}

We update \(\gamma^{(k)}\) adaptively. Define the \emph{gain ratio} (observed vs. predicted decrease)
\begin{equation}
    \rho^{(k)} = \frac{\Ldiscrete(\vect{\eta}^{(k)}) - \Ldiscrete(\vect{\eta}^{(k+1)})}{\Ldiscrete(\vect{\eta}^{(k)}) - m^{(k)}(\vect{\eta}^{(k+1)})}.
\end{equation}
If \(\rho^{(k)}< 0.01\), we reject the step and increase \(\gamma^{(k)}\); otherwise we accept and update \(\gamma^{(k)}\) for the next iteration. One practical choice, analyzed in \cite{Aravkin2024-bz}, is
\begin{equation}
    \gamma^{(k+1)}=\begin{cases}
        \gamma^{(k)}\,c & \rho^{(k)}<0.4,\\[2pt]
        \gamma^{(k)}     & \rho^{(k)}\in[0.4,0.9],\\[2pt]
        \gamma^{(k)}/c & \rho^{(k)}>0.9,
        \end{cases}
\end{equation}
for some \(c>1\); we use \(c=1.2\).

\subsection{Sparsifying iterations}
\label{subsec:sparsifiers}
As mentioned in \Cref{subsec:optimization}, we alternate between solving the fixed support, smooth nonlinear least squares problem (\ref{eq:active-set-step}) and updating the support of the coefficient set to obtain \(S^{k+1}\), applying a sparse feature selection method in \cref{eq:sparsify-step}. 
Hence, a crucial component of our algorithm is the choice of regularizer or feature selection procedure in \cref{eq:sparsify-step}. The work that introduced SINDy \cite{Brunton2016} suggested ridge regularized sequentially thresholded least squares (STLSQ). This alternates between solving a ridge regularized least squares problem, and evicting features whose coefficients fall below a given threshold. 
Another approach is to add regularization which promotes sparsity. 
    These include LASSO \cite{tibshirani1996lasso} which regularizes with an \(\ell_1\) penalty, SR3\cite{SR3, Champion2020} which minimizes smoothed versions of nonsmooth penalties,
    and MIOSR \cite{bertsimas_learning_2023}, which directly solves \(\ell_0\) constrained problems using mixed integer programming. 
Ensembling the approaches above via bagging and feature subsampling can sometimes improve the performance of such algorithms \cite{fasel_ensemble-sindy_2022}, and provide some uncertainty quantification \cite{ensemble_uncertainty}. 

An alternative approach to feature selection is to define an appropriate sparsity promoting prior distribution and likelihood, and approximately sample from the posterior. These include spike-and-slab priors \cite{Gao2022} and the regularized horseshoe prior \cite{bayesiansindy}.

In most of our experiments, we apply ridge regularized STLSQ as it empirically performed the best, and was easy to tune. In  \cref{subsubsec:lorenz-bench} we applied SR3 \cite{SR3} with an L1 penalty--we found this to be more consistent on varying trajectory lengths and noise levels without any retuning, but it can be difficult to intuit good parameters a priori. In certain cases, including ensembling within our sparsification procedure greatly improved the robustness of our method, see \cref{subsubsec:nonlin-osc}. At the cost of losing a straightforward variational problem, considering an algorithmic sparsification framework allows the specific choice of sparsifier to become an implementation detail that is easy for users to modify and provides substantial flexibility. 

\section{Numerical results}\label{sec:numerics}
We validate JSINDy on a range of examples that illustrate its capability in handling noisy, scarce and partial observations, higher order systems, and model misspecification.\footnote{An implementation of JSINDy and the code used to generate the experiments in this section are available at \url{https://github.com/AHsu98/jsindy}.}

For each example, we generate a true trajectory \(\vect{x}(t)\) solving the ODE \(\dtp \vect{x}(t) = f(\xdiffs,\vect{\theta})\), and take noisy measurements \(\vect{y}_n = \vect{V_n}^\top \vect{x}(t_n) + \vect{\epsilon}_n\) by adding Gaussian noise \(\vect{\epsilon}_n\).
Fitting our model results in state estimates \(\widehat{\vect{x}}(t)\), and system coefficients \(\widehat{\vect{\theta}}\).
We evaluate the state estimates, \(\widehat{\vect x}(t)\) against the true trajectory \(\vect{x}(t)\) using root mean square error (RMSE) averaged across coordinates, and normalized by the variance across time of each coordinate.

\begin{equation}
    \mse^2 := \frac{1}{T\cdot d} 
    \sum_{m=1}^d \frac{\int_0^T (\widehat{x}_m(t) - x_m(t))^2 dt}{\int_0^T (\overline{x_m} - x_m(t))^2 dt},\quad \overline{x_m} := \frac{1}{T}\int_0^T x_m(s)ds
\end{equation}

Where appropriate, we evaluate the fitted coefficient matrix \(\widehat {\vect{\theta}}\) using the normalized Frobenius norm compared to the true coefficients \(\vect{\theta}\).

\begin{equation}
    \mae := \frac{\|\widehat{\vect{\theta}} - \vect{\theta}\|_F}{\|\vect{\theta}\|_F}
\end{equation}

This is not relevant in \Cref{subsec:part-obs} due to unidentifiability, and in \Cref{subsec:miss-spec}, as there is no ``true" coefficient matrix. All models, for each experiment, use a sum of a scaled constant kernel and Mat\'ern kernel with \(\nu = 11/2\), which corresponds to five times differentiable functions. 

We test our algorithm on a variety of systems and observation patterns. We start with the Lorenz 63 system, which is commonly used for testing methods for identifying nonlinear dynamics, using a relatively standard sampling frequency and noise level. Next, we consider the regime of scarce sampling, with sampling frequencies far below what have been previously considered in the literature, working with the Lorenz and Lotka--Volterra dynamics. We then consider problems with partial observations, where we never observe the entire state at once (even with noise), with cases of alternating between observing the different coordinates, and observing only a subset. Finally, we consider the related problem of identifying higher order ODEs from only observations of the state variable, but not its derivatives, working with the Van der Pol system, a higher dimensional coupled version of the Duffing system, and the nonlinear pendulum under model misspecification. The problem of identifying higher order ODEs from point observations can be seen as a version of a partially observed problem with additional structure between dimensions if reduced to first order. 

In all of these examples, we find that the assumption that \(\vect{x}(t)\) satisfies \emph{some} autonomous ODE is a strong prior that greatly improves inference, even when the ODE must itself be identified. Identification of the ODE itself is further facilitated by the assumption of sparsity in the dynamics which significantly restricts the class of models under consideration. The benefits of a joint approach to learning as opposed to two-step estimation have been observed in many works \cite{Fung2025-odr,Chen2021,ijcai2021p283,Hokanson2023-sr,Kaheman_2022,jalalian2025dataefficientkernelmethodslearning}. The problem of data scarcity was considered in \cite{Chen2021,jalalian2025dataefficientkernelmethodslearning,Hokanson2023-sr}. The SIDDS algorithm of \cite{Hokanson2023-sr} found that such an approach can reach the Cramer-Rao lower bound and demonstrated the necessity of decoupling the discretization of the ODE from the grid of observations in order to reduce bias when the sampling period is large. Our work expands the flexibility to even scarcer data and partial observations. 

We largely use the same hyperparameters throughout. We set the trajectory regularization \(\lambda = 10^{-3}\), the data weight \(\alpha = 1\), the collocation penalty \(\beta = 10^{5}\), and use \(500\) uniformly spaced collocation points with uniform weight. Except for \Cref{subsec:comparisons} where we apply SR3, we use STLSQ as a sparsifier, with thresholds and ridge regularization parameters chosen for each problem. See \Cref{table:numerics} for full details of hyperparameters and problem parameters.

Additionally, we include two systematic experiments: one varying the time horizon and noise level on the Lorenz system, and another varying the noise level on a nonlinear damped oscillator system. These show promising empirical results compared to other competitive models. 

\setlength{\aboverulesep}{0pt}
\setlength{\belowrulesep}{0pt}
\setlength{\tabcolsep}{2pt}
\renewcommand{\arraystretch}{0.95}

\begin{table}[ht]
\centering
\resizebox{\linewidth}{!}{%


\begin{tabular}{|l|S|S|c|S|S|l|l|S|S|
                S[table-format=1.2e-1]|
                S[table-format=1.2e-1]|}
\toprule \textbf{Experiment}\rule{0pt}{2.6ex} & \textbf{Noise abs} & \textbf{Noise rel (\%)} & \textbf{Order} &
\boldmath{$\Delta t$} & \boldmath{$t_1$} & \textbf{Obs.} & \textbf{Obs. Pattern} &
\textbf{Thr} & \textbf{Ridge} & \(\mse\) & \(\mae\) \\
\midrule
Lorenz               & 2.0 & 24.60 & 1 & 0.05 & 10 & 200 (\(\times\)3) & Full 
                     & 0.5 & 0.01 & 2.35e-2 & 3.97e-2 \\
\hline
Scarce Lorenz        & 0.0 & 0.00 & 1 & 0.5   & 10  & 21 (\(\times\)3)  & Full              & 0.5  & 0.01 & 1.61e-3 & 2.36e-3 \\
\hline
Lotka–Volterra       & 0.2 & 19.68 & 1 & 3     & 50  & 17 (\(\times\)2)  & Full              & 0.1  & 0.1  & 1.13e-2 & 3.32e-2 \\
\hline
Lorenz Streams       & 0.32 & 6.92 & 1 & 0.025 & 10  & 404               & One at a time     & 0.25 & 0.01 & 5.69e-3 & 1.96e-2 \\
\hline
Lorenz Unobserved    & 0.32 & 3.98 & 1 & 0.025 & 10  & 404               & 3rd coord. unobs. & 0.1  & 0.01 & 6.84e-3 & {N/A}   \\
\hline
Van der Pol          & 0.1 & 7.39 & 2 & 2     & 100 & 50                & Pos.-only         & 0.1  & 0.01 & 4.54e-2 & 1.86e-1 \\
\hline
Coupled Duffing      & 0.2 &16.85 & 2 & 0.2   & 30  & 150 (\(\times\)5) & Pos.-only         & 0.1  & 0.01 & 3.55e-2 & 3.63e-2 \\
\hline
Nonlin. pend. (linear)   & 0.1 & 4.17 & 2 & 1     & 100 & 100               & Pos.-only         & 0.1  & 0.01 & 1.25e-01 & {N/A}   \\
\hline
Nonlin. pend. (cubic) & 0.1 & 4.17 & 2 & 1     & 100 & 100               & Pos.-only         & 0.1  & 0.01 & 2.22e-2 & {N/A}   \\
\hline
Damped nonlin. osc. & 0.32 & 38.22 & 1 & 0.05 & 10 & 200 (\(\times 2\)) & Full & 0.08 & 100.00 & 6.60e-2 & 2.47e-1\\
\bottomrule
\end{tabular}

}
\caption{Summary of experiments. The \(n (\times d)\) indicates that the observations were made at \(n\) time points of the \(d\) dimensional state. \textbf{Thr.} and \textbf{Ridge} are the threshold and ridge regularization strength used in the STLSQ sparsifier.}
\label{table:numerics}
\end{table}

\noindent
\textbf{Lorenz System.} As a first example which is commonly used to test methods for identifying nonlinear dynamics, we considered the Lorenz system: 
\begin{equation}\label{eqn:true_lorenz}
\begin{aligned}
    \frac{dx_1}{dt} &= -10 x_1 + 10 x_2, \\
    \frac{dx_2}{dt} &= 28 x_1 - x_2 - x_1 x_3, \\
    \frac{dx_3}{dt} &= -\frac{8}{3} x_3 + x_1 x_2. \\
\end{aligned}
\end{equation}
We use the initial condition \(\vect{x}(0) = (-8,8,27)\) and took full observations with a sampling period \(\Delta t = 0.05\), and Gaussian noise with variance \(\sigma^2 = 4\).

\begin{figure}
    \centering
    \includegraphics[width=1\linewidth]{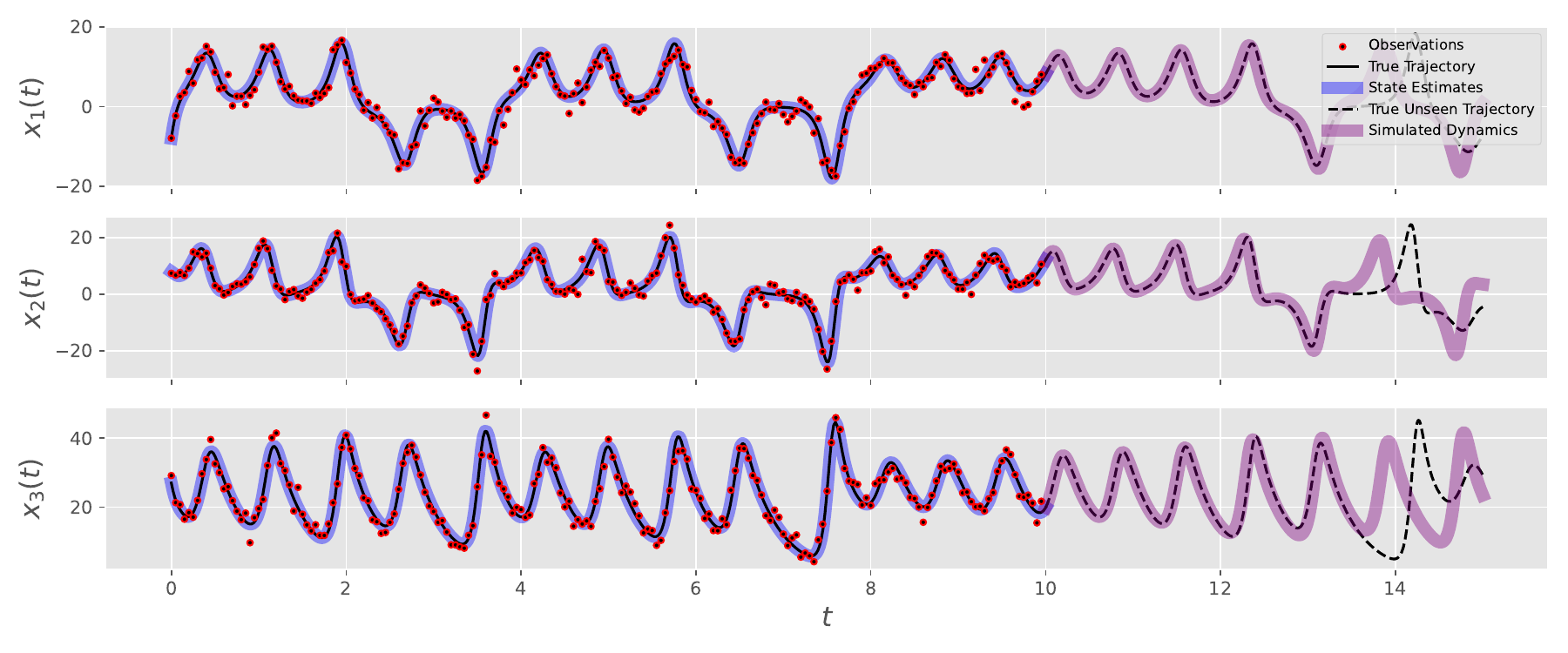}
    \caption{Results for Lorenz 63 system. Observations sampled at $\Delta t=0.05$ until  \(t=10\) with added Gaussian noise of variance \(\sigma^2=4\). Simulated dynamics from \cref{eqn:lorenz} are compared against true dynamics from \(t=10\) until \(t=15\).
    }
    \label{fig:lorenz}
\end{figure}

Our approach obtained the equations
\begin{equation}
\begin{aligned}
    \frac{d x_1}{dt} &= -9.607 x_1 + 9.689 x_2 \\
    \frac{d x_2}{dt} &= 29.122 x_1 - 1.239 x_2 - 1.033 x_1 x_3 \\
    \frac{d x_3}{dt} &= -2.624 x_3 + 0.999 x_1 x_2. \\
    \label{eqn:lorenz}
\end{aligned}
\end{equation}
with \(\mse = 2.353\cdot 10^{-2}\) and \(\mae = 3.975 \cdot 10^{-2} \). Results can be seen in \Cref{fig:lorenz}. This is essentially a baseline example. For this problem, the sampling rate was at the lower end of the standard range that is considered in the literature. Our approach both identified the correct active features, and approximated the coefficient values to reasonable accuracy. In the rest of our numerical examples, we largely consider examples where traditional two-step methods are not applicable, or will obviously break down.

\subsection{Scarce observations}
We now consider cases of very scarcely sampled data with the Lorenz 63 and the Lotka--Volterra systems. 

\subsubsection{Lorenz 63 Scarce} 
We modified the previous example, instead sampling 21 observations of the entire state over the interval \([0,10]\) with a sampling period of \(\Delta t=0.5\), and no noise.  Results can be seen in \Cref{fig:lorenz_scarce} with the following learned system below:

\begin{equation}
\label{eqn:scarce-lorenz}
    \begin{aligned}
    \frac{dx_1}{dt} &= -10.008\,x_1 + 10.006\,x_2 \\
    \frac{dx_2}{dt} &= 28.073\,x_1 - 1.012\,x_2 - 1.003\,x_1 x_3 \\
    \frac{dx_3}{dt}&= -2.663\,x_3 + 1.001\,x_1 x_2
    \end{aligned}
\end{equation}
with \(\mse = 1.608\cdot 10^{-3}\) and \(\mae = 2.356\cdot 10^{-3}\). These results are very accurate, owing to the lack of noise. Indeed, a parameter counting argument suggests that once the support is identified, this problem is essentially a noiseless and well-posed though ill-conditioned inversion problem, where we should expect good algorithms to attain very high accuracy. 

\begin{figure}
    \centering
    \includegraphics[width=1\linewidth]{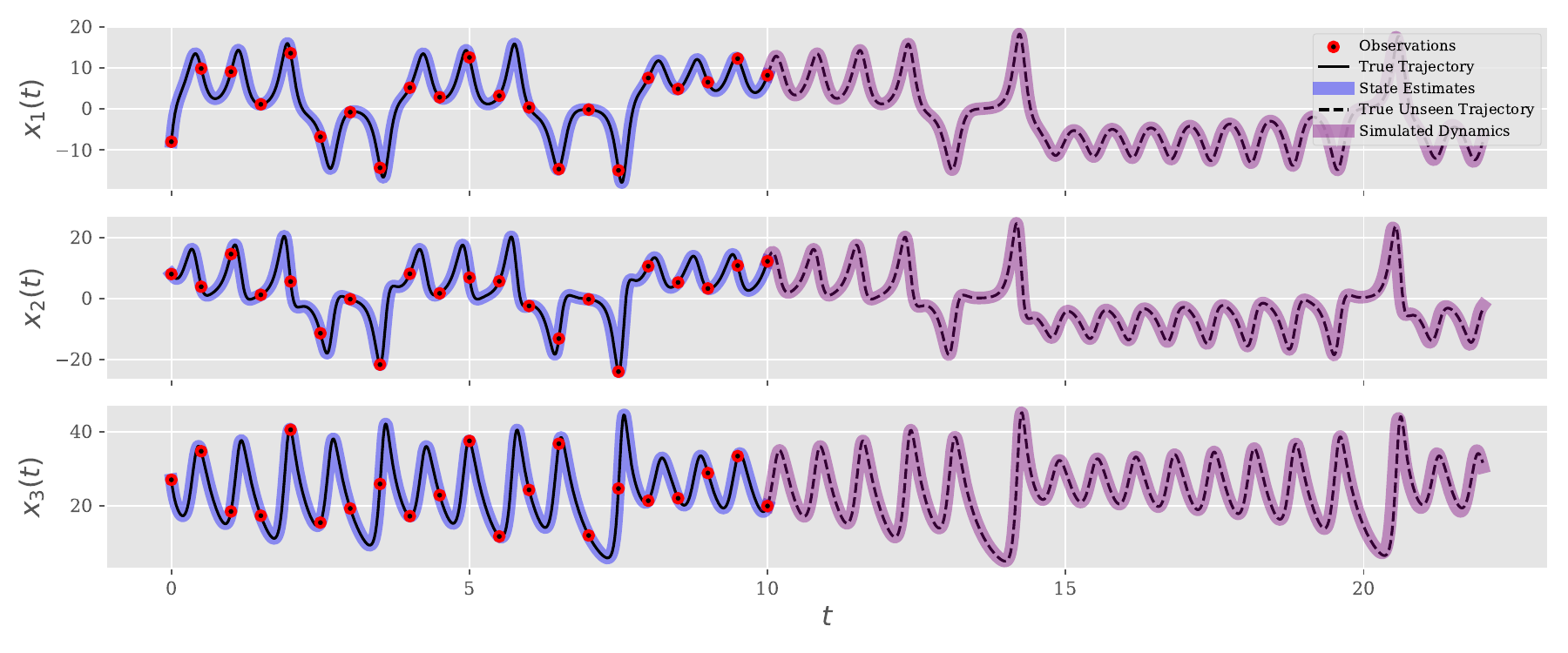}
    \caption{Lorenz 63 system with observations sampled at $\Delta t=0.5$ until  \(t=10\) with no added noise. Simulated dynamics of \eqref{eqn:scarce-lorenz} are plotted against true, unseen, trajectory from \(t=10\) to \(t=25\).
    }
    \label{fig:lorenz_scarce}
\end{figure}

\subsubsection{Lotka--Volterra}\label{subsubsec:lv}
The Lotka--Volterra system is a nonlinear oscillator with periodic solutions, describing population dynamics of a predator-prey ecology, and is a common benchmark for system identification \cite{bertsimas_learning_2023,fasel_ensemble-sindy_2022,Kaheman_2022,haas2024kalman}.  
This system governs the populations of prey $x_1$ and predators $x_2$:
\begin{equation}
    \begin{split}
        \frac{dx_1}{dt} &=  x_1 -  x_1 x_2,\\
        \frac{dx_2}{dt} &= - x_2 +  x_1x_2,
    \end{split}
    \label{eqn:true-lv}
\end{equation}
 We again considered scarce sampling, with approximately \(2\) observations per cycle, taken at a sampling period of \(\Delta t=3\) in the interval \(t\in [0,50]\).  Gaussian noise with variance $\sigma^2=0.04$ was added to the observations.
We set the feature library to include monomials up to second order:
\begin{equation*}
    \Phi = \left\{1, x_1, x_2, x_1^2, x_1x_2, x_2^2\right\}. 
\end{equation*}

Our approach obtained the system
\begin{equation}
    \begin{split}
        \frac{dx_1}{dt} &= 1.009 x_1 - 1.056 x_1 x_2 \\[6pt]
        \frac{dx_2}{dt}&= - 0.980 x_2 + 0.500 x_1 x_2.
    \end{split}
    \label{eqn:learned-lv}
\end{equation}
with \(\mse = 1.131\cdot 10^{-1}\) and \(\mae =3.323\cdot 10^{-2}\). 
We also compare the phase portrait of the learned model to the true dynamics, simulating multiple trajectories from various initial conditions in \Cref{fig:lv-phase-portrait}. 

\begin{figure}
    \centering
    \includegraphics[width=1\linewidth]{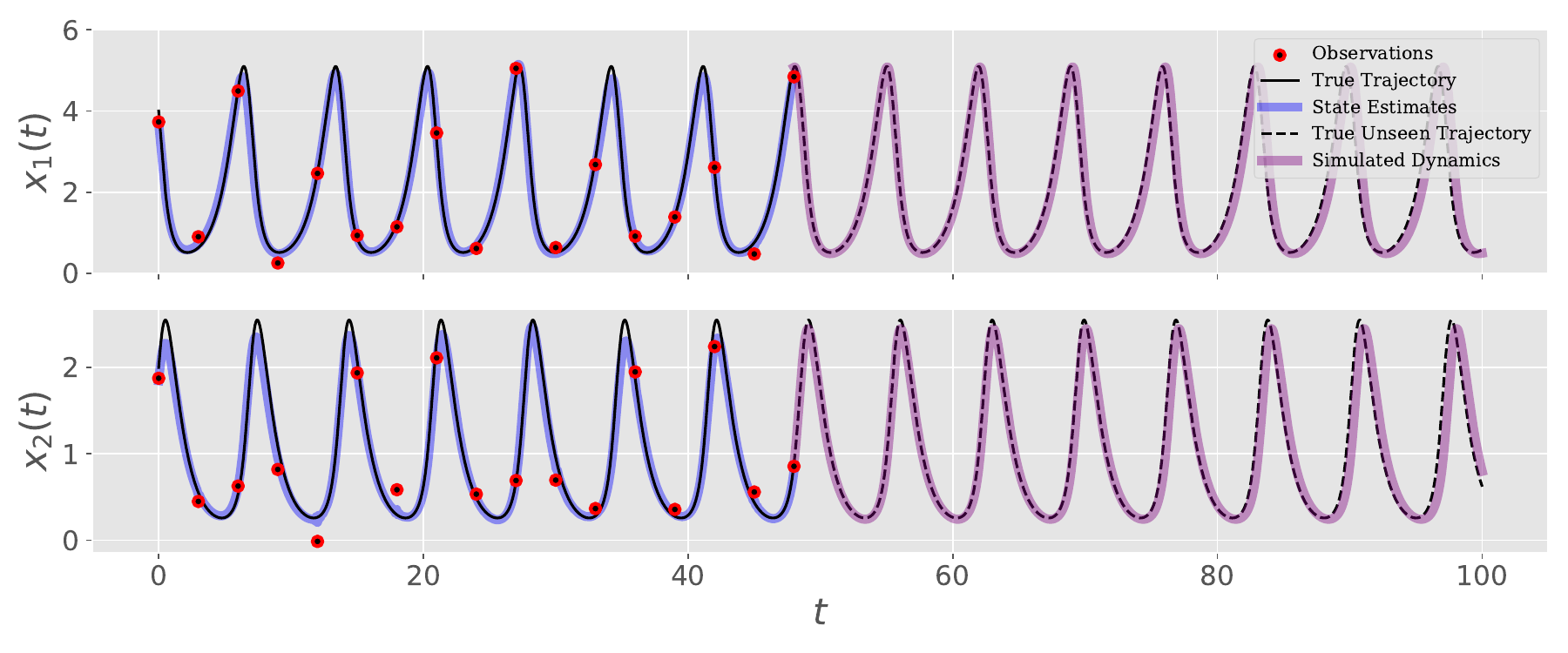}
    \caption{Results for the Lotka--Volterra system of \cref{subsubsec:lv}. Samples are taken at the rate \(\Delta t=3\) up to \(t=48\), including additive noise with \(\sigma = 0.2\).  Simulated dynamics are shown up to \(t=100\), comparing the results from \eqref{eqn:learned-lv} with the true dynamics.}
    \label{fig:lv-results}
\end{figure}
\begin{figure}
    \centering
    \includegraphics[width=
    \linewidth]{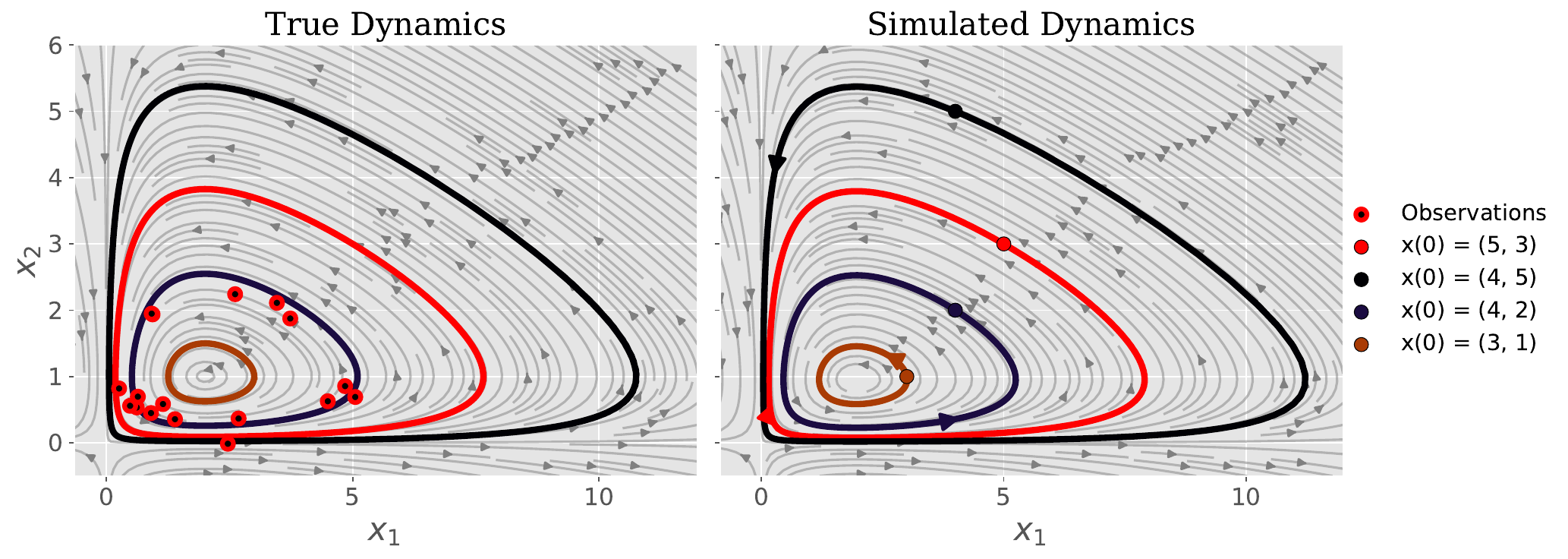}
    \caption{Left: True Lotka--Volterra plotted from multiple initial conditions on phase portrait using \eqref{eqn:true-lv}. Red trajectory and measurements are those from  \Cref{fig:lv-results}. Right: Using learned model \eqref{eqn:learned-lv} to simulate same trajectories provided same initial conditions. }
    
    \label{fig:lv-phase-portrait}
\end{figure}

\subsection{Partial observations}\label{subsec:part-obs}
We now consider problems with only partial observations. Rather than observing the entire state \(\vect{x}(t_n)\), we observe a sequence of linear measurements of the state, \(\vect{y}_n = \vect{V}_n^\top\vect{x}(t_n)\). 
We test our algorithm on the Lorenz dynamics with two challenging observation patterns.
In these examples, we took each \(\vect{V}_n\) to be a standard basis vector, observing one coordinate at a time. 
 In both of these examples, we took the same initial conditions as above, set the sampling period \(\Delta t = 0.025\), and set the noise \(\sigma^2 = 0.1\). 

\subsubsection{Alternating coordinates}
In the first case, we alternated between periods of observing each of the three coordinates one at a time for \(10\) time steps each. For each coordinate, this leads to relatively long gaps (of duration \(t=0.5\) ) where no observations of the state are made. 
However, our approach allows the dynamics to effectively fill in this missing information, correlating the data available between the different coordinates. Notably, if the dynamics were \textit{known ahead of time}, then applying a high order nonlinear Kalman smoothing method could be similarly effective, but this is not applicable here, since there is no a priori knowledge of the governing equations \cite{Bell1994IteratedKalmanSmoother,Sarkka2013BayesianFilteringSmoothing}. Our model learned the system in \eqref{eqn:learn-lorenz-streams} with \(\mae = 5.692\cdot 10^{-3}\) and \(\mse = 1.961\cdot 10^{-2}\). Results can be seen in  \Cref{fig:lorenz_streams}.

\begin{equation}
\begin{aligned}
\frac{d x_1}{d t} &= -9.987\, x_1 + 9.951\, x_2,\\
\frac{d x_2}{d t} &= 28.104\, x_1 - 1.030\, x_2 - 1.003\, x_1 x_3,\\
\frac{d x_3}{d t} &= -0.606\, \cdot 1 - 2.642\, x_3 + 1.005\, x_1 x_2.
\end{aligned}
\label{eqn:learn-lorenz-streams}
\end{equation}

\begin{figure}
    \centering
    \includegraphics[width=1.\linewidth]{  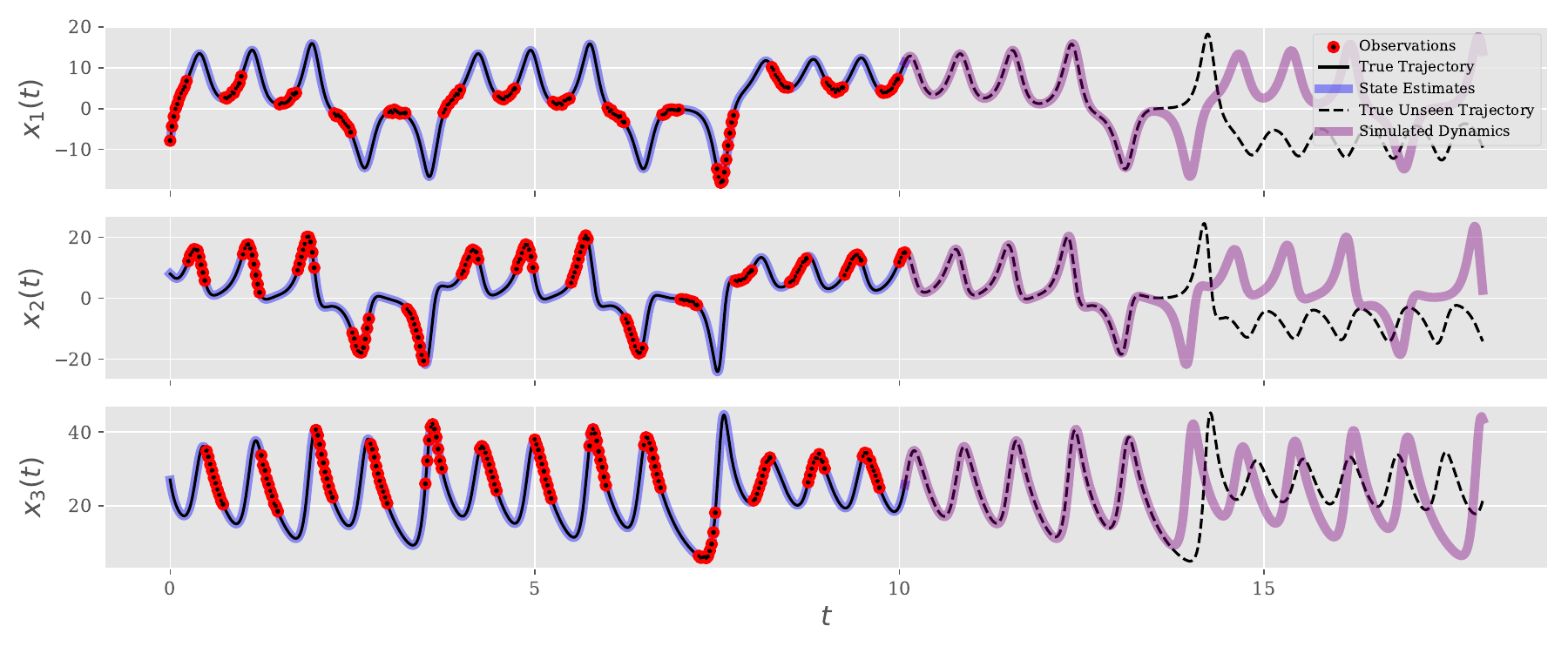}
    \caption{ Results for Lorenz 63 system with streams of partial observations. Observations alternate between each coordinate every 10 samples at sampling rate \(\Delta t = 0.025\) up to \(t=10\) with added noise set to \(\sigma^2 = 0.1\). Learned dynamics are simulated from \(t=10\) until \(t=20\).}
    \label{fig:lorenz_streams}
\end{figure}

\subsubsection{Hidden coordinates}
We now consider the problem of identifying dynamics with a hidden variable that is never observed. In this case, the observations alternate between the first two coordinates on each time step, and the third is never observed. 
\cite{Ribera2022-nz} considered a very similar setup, introduced a variational annealing approach to identify dynamics in the presence of hidden variables, and also tested on variants of the Lorenz dynamics. \cite{hamzi-partial} approached the hidden variable problem using the lens of computational graph completion \cite{owhadi-CGC} and kernel learning, though only considered problems in discrete time without noise.
\begin{figure}
    \centering
    \includegraphics[width=1\linewidth]{  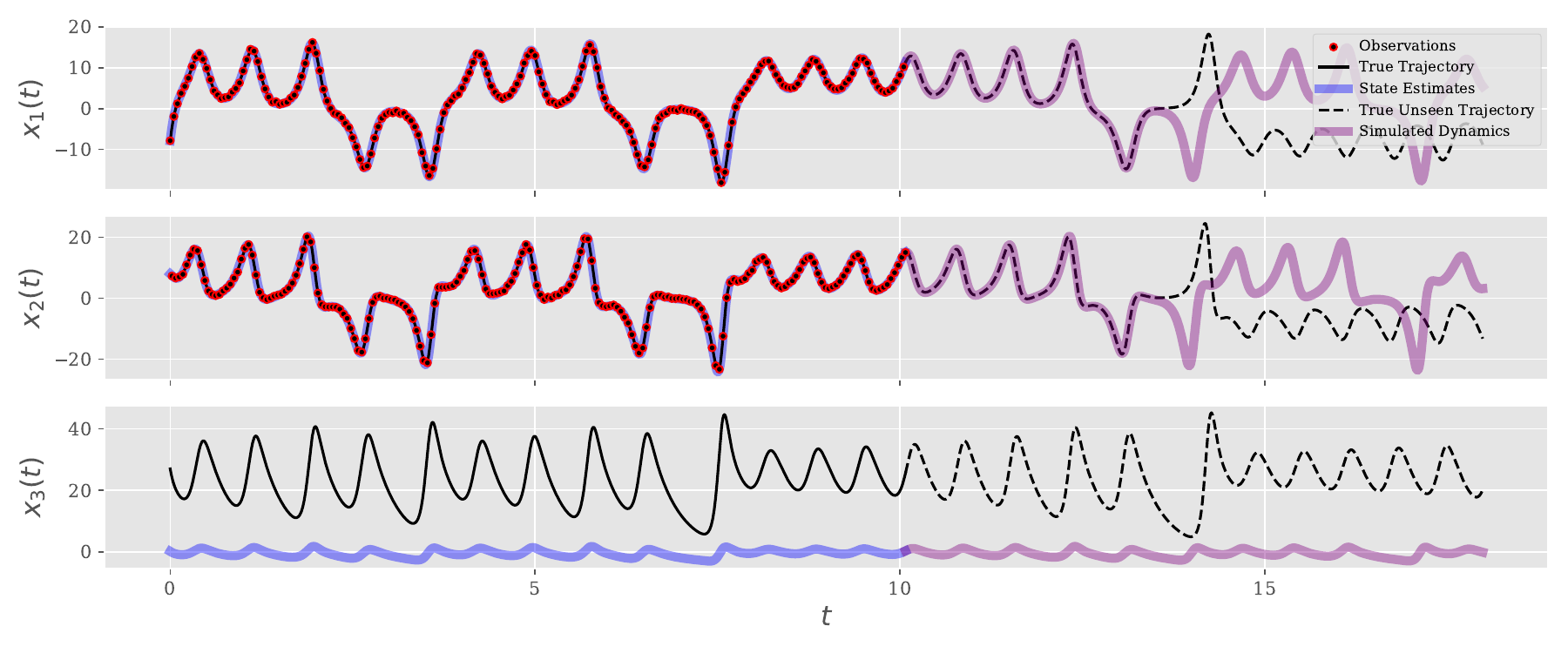}
    \caption{Results for Lorenz 63 system with an unobserved coordinate. Observations alternate between first two coordinates at sampling rate \(\Delta t = 0.025\) up to \(t = 10\), with no knowledge of the third, except that the system is known to be three dimensional. Our algorithm constructs a third, latent coordinate and corresponding dynamics which, when evolved together with the first two coordinates, reproduces the observed dynamics. Further, we can effectively extrapolate the time series forward by using our learned dynamics and imputing the value of our artificial coordinate as a new initial condition, generating good predictions for the first two coordinates. }
    \label{fig:lorenz_unobserved}
\end{figure}
In this case, we cannot expect to recover the values of the hidden third coordinate due to strong identifiability issues. For example, rescaling the third coordinate by an arbitrary scalar, and modifying the first two equations appropriately gives rise to an infinite family of equations of the same sparsity level which reproduce exactly the original dynamics, restricted to the first two coordinates. 

Nevertheless, we can learn an extra latent coordinate; when combined with the first two observed coordinates, these together satisfy a modified ODE that reproduces the dynamics on the observed coordinates. This is demonstrated by taking our state estimate at \(t =10\), the last observed data-point, and simulating the learned ODE forward in time to predict the evolution of the first two states. Doing so produces accurate trajectories for four additional cycles, including a lobe switch, before diverging from the true trajectory due to chaos. Note that modeling only the first two states with an ODE is insufficient--the flow would be self-intersecting and cannot satisfy any autonomous ODE without a third coordinate (no chaos in two dimensions). The results can be seen in \Cref{fig:lorenz_unobserved} and \cref{eqn:learned-lorenz-unobs}. Because the true hidden state is unidentifiable, we compute the state estimation error restricted to the first two coordinates and obtain \(\mse = 6.840\cdot 10^{-3}\). Similarly, since we cannot expect to recover the third coordinate, we omit the coefficient error. 

\begin{equation}
\begin{aligned}
\frac{d x_1}{d t} &= 0.722 \cdot 1 - 9.969\, x_1 + 9.936\, x_2 + 0.430\, x_3,\\
\frac{d x_2}{d t} &= -0.456 \cdot 1 - 0.453\, x_1 - 0.979\, x_2 - 0.316\, x_3 
                    + 0.408\, x_1 x_2 - 8.832\, x_1 x_3 + 0.161\, x_3^{2},\\
\frac{d x_3}{d t} &= -8.626 \cdot 1 + 0.057\, x_2 - 2.683\, x_3
                    + 0.133\, x_1 x_2 - 0.421\, x_1 x_3.
\end{aligned}
\label{eqn:learned-lorenz-unobs}
\end{equation}

Traditional approaches to handling partially observed systems and modeling time series often enrich the dynamics by inserting time delay coordinates \cite{martin2024TimeDelay,George2021Decomposing,brunton_chaos_2017,rand_detecting_1981,Bakarji2023-sq,Broomhead1989-bh,Lee2019-wf,Pan2020-nd}. However, that comes with its own challenges, such as selecting the number and spacing of delay coordinates, difficulties in continuous time, especially when observations are not evenly spaced, and introducing bias due to possibly noisy inputs. Our approach has its own drawback of requiring a separate state estimation problem to make predictions, and producing a somewhat uninterpretable third coordinate. We remark that it is possible to include time delay embeddings in our approach as well, essentially turning our collocation method for differential equations into a collocation approach for a \emph{delay} differential equations--we leave this to future work. 

\subsection{Higher order systems}
Our approach is also capable of learning higher order differential equations given only state measurements.  While devices exist that can directly measure first and second order derivatives (e.g. Doppler radar, gyroscopes, accelerometers), measurements of the system state are the most common, and second order laws are ubiquitous in physics. We consider the Van der Pol oscillator, a coupled version of the Duffing system, and the nonlinear pendulum under model misspecification.

\subsubsection{Van der Pol oscillator}

The Van der Pol oscillator is governed by the second-order differential equation
\begin{equation}
    \frac{d^2x}{dt^2} = \mu(1-x^2) \frac{dx}{dt} - x.
    \label{eqn:true-vdp}    
\end{equation}
We set \(\mu = 0.5\) and note that for \(\mu>0\), Van der Pol systems demonstrate an attractive limit cycle with fast and slow branches. We trained on equally sampled measurements with \(\Delta t=2\) for \(t \in [0,100]\) with additive Gaussian noise with variance $\sigma^2 = 0.1$. We took the feature library to contain cubic polynomials of $x$ and $\frac{d}{dt}x$. 
Our approach obtained the equation
\begin{equation}
        \frac{d^2 x}{dt^2} = -0.982 x+ 0.343 \frac{dx}{dt}                   -0.337 \frac{dx}{dt} x^2
        \label{eqn:vdp}
\end{equation}
with coefficient error \(\mae = 1.858 \cdot 10^{-1}\), and state estimation error \(\mse = 4.540\cdot 10^{-2}\). Our algorithm identifies nearly the correct equation with exactly the correct support.

\begin{figure}
    \centering
    \includegraphics[width=1\linewidth]{   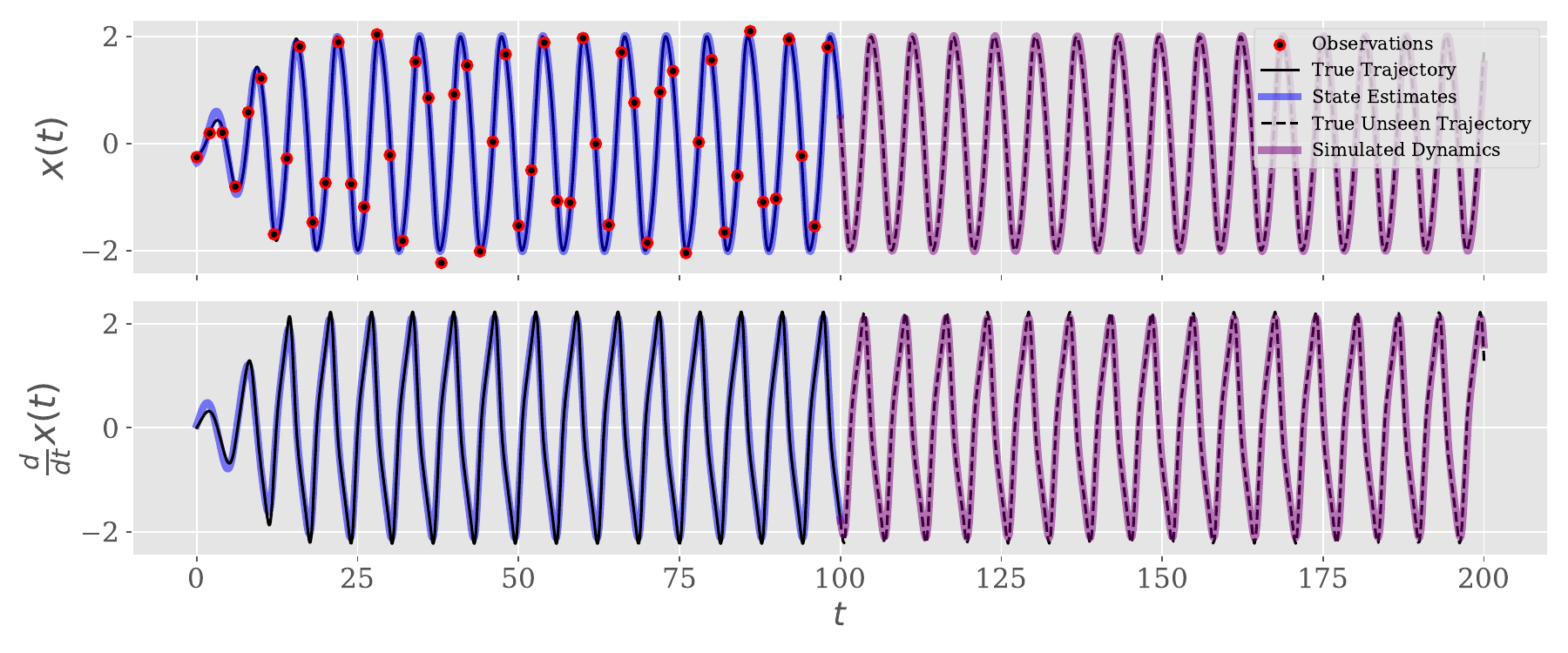}
    \caption{Results for the Van der Pol oscillator. Sampling is done only on system state, \(\vect{x}\), with sampling rate \(\Delta t = 2\) up to \(t=100\) with simulated dynamics from \(t=100\) to \(t=200\) using the learned equations from \cref{eqn:vdp}.}
    \label{fig:vdp}
\end{figure}

\begin{figure}
    \centering
    \includegraphics[width=1\linewidth]{   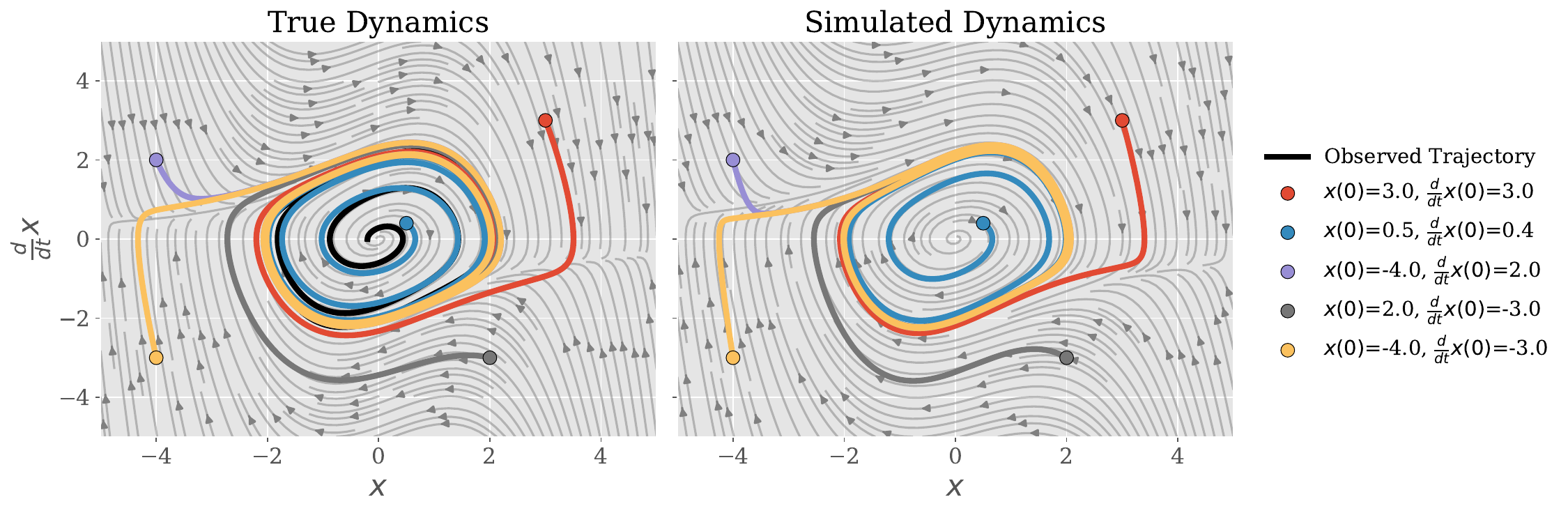}\\
    \caption{Left: True Van der Pol oscillator trajectories plotted from multiple initial conditions on phase portrait using \cref{eqn:true-vdp}. Right: Using learned model \cref{eqn:vdp} to simulate same trajectories provided same initial conditions.}  
    \label{fig:vdp-phase}
\end{figure}

\subsubsection{Coupled Duffing oscillators}
As an example of a larger second order ODE, we consider a ring of $d$ Duffing oscillators with nearest-neighbor coupling. For each dimension $i=1,\dots,d$, \(x_i(t)\) satisfies the differential equation
\begin{align}
    \frac{d^2}{dt^2} x_i(t) &= -\,\alpha_i\, x_i(t) - \beta_i\, x_i(t)^{3}
    + \mu\left(x_{i-1}(t)-2x_i(t)+x_{i+1}(t)\right),
\label{eqn:coupled-duffing}
\end{align}
with periodic indexing \(x_{d+1}:=x_1\). These dynamics are Hamiltonian. 
Set \(\vect{v} := \frac{d}{dt} \vect{x}\), and  
\begin{equation}\label{eqn:duffing-hamiltonian}
    \mathcal{H}(\vect{x},\vect{v})
    = \sum_{i=1}^d \frac{1}{2}\,v_i^2
    + \sum_{i=1}^d \left( \frac{\alpha_i}{2}\,x_i^2 + \frac{\beta_i}{4}\,x_i^4 \right)
    + \frac{\mu}{2}\sum_{i=1}^d \left(x_{i+1}-x_i\right)^2,
\end{equation}
the dynamics satisfy
\begin{equation}
    \frac{d}{dt}\vect{x} = \nabla_{\vect{v}} \mathcal{H}(\vect{x},\vect{v})
    ,\quad 
    \frac{d}{dt} \vect{v} = -\nabla_{\vect{x}} \mathcal{H}(\vect{x},\vect{v})
\end{equation}

We remark that the nonlinearity has a significant qualitative effect in the dynamics. If we consider a linear version where we drop the cubic restoring force, the dynamics would be governed by the matrix \(M:=\mathrm{Diag}((\alpha_i)_{i=1}^n) + \mu K\), where \(K\) is the graph Laplacian on a cycle. Taking \(\mu<0\) causes the coupling term to encourage exponential blow up, and with \(\mu\) negative enough relative to \(\alpha_i\), \(M\) will have both positive and negative eigenvalues. This implies unbounded level sets in the Hamiltonian, and divergent trajectories in the linear version.
However, the positive cubic restoring terms (\(\beta_i>0\)) are quartic in the Hamiltonian, which guarantees compact level sets, bounded trajectories \cite{teschl2012ordinary}.

We considered an example in $d=5$ dimensions with parameters \(\mu=-2,\; 
    \alpha_i \stackrel{\text{i.i.d.}}{\sim}\mathrm{Unif}[2,4],\; 
    \beta_i=2\). 
We initialized the system with all coordinates at the same position and velocity, \(x_i(0)=1\) and \(\frac{d}{dt} x_i(0)=0\) for \(i=1,\dots,d\).
We took equally spaced measurements of the position values \(x_i(t)\) with a sampling period \(\Delta t = 0.2\) between \(t = 0\) and \(t = 30\) as the input data, and include additive Gaussian noise of variance \(\sigma^2 = 0.04\) which corresponds to a signal to noise ratio of \(39.2\). 

Trajectory regularization, data weight, collocation penalties and points remain as in the previous example \(\lambda = 10^{-3}, \alpha = 1, \beta = 10^{5}, m=1000\). We took the sparsifier to be STLSQ with threshold 0.1, and ridge regularization parameter 0.01. The feature library includes monomials up to cubic order of \(x_i\) and \(\frac{d}{dt}x_i\), but excludes interaction terms,
\begin{equation}
    \dict= \left\{1,x_1,\frac{d}{dt}x_1,...,x_5,\frac{d}{dt}x_5,x_1^2,(\frac{d}{dt}x_1)^2,\dots,x_5^3,(\frac{d}{dt}x_5)^3\right\}.
\end{equation}
This library contains 31 functions (down from 286 if we were to include all 10-variable polynomials up to cubic order). Our algorithm correctly identifies the correct support in all five variables which amounts to finding the 20 active variables out of 155 with a maximum coefficient error of \(0.12\). 

\begin{figure}[ht]
  \centering
  \makebox[\textwidth][c]{%
    \includegraphics[width=1.22\textwidth]{   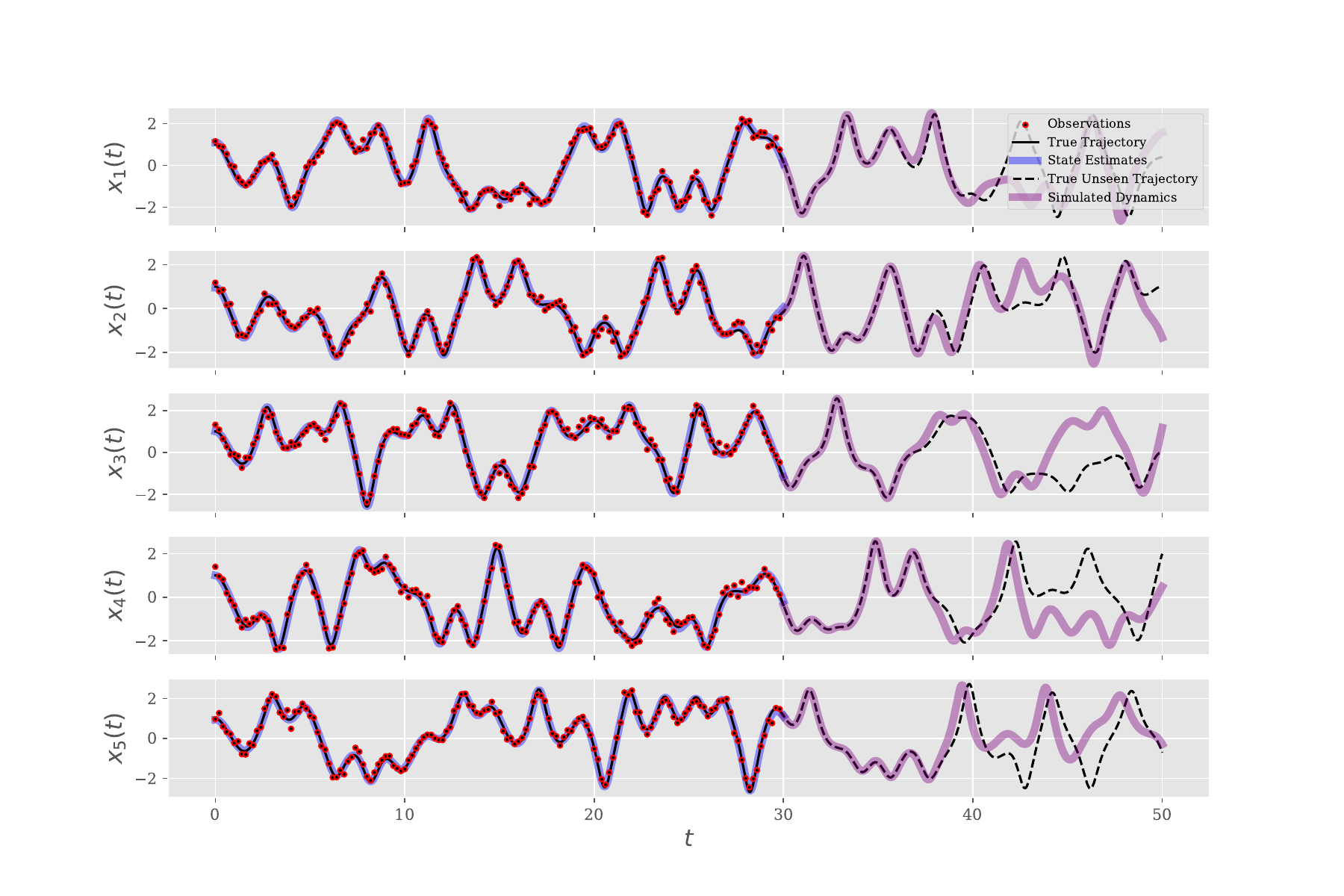}%
  }
  \caption{Position observations sampled at rate \(\Delta t=0.2\) up to \(t=30\) with added noise set to \(\sigma^2=0.04\) equating to a relative noise of \(2.55\%\). Learned and true dynamics from \cref{eqn:learned-cd} are simulated from \(t=30\) to \(t=50\).}
  \label{fig:cduff}
\end{figure}
We show the learned equations on the left, and the true equations on the right. 

\begin{equation}
\begin{aligned}
    \frac{d^2 x_1}{dt^2} &= 1.272\,x_1 - 2.074\,x_2 - 2.083\,x_5 - 1.890\,x_1^{3}, 
    & \frac{d^2 x_1}{dt^2} &= 1.392\,x_1 - 2\,x_2 - 2\,x_5 - 2\,x_1^{3} \\
    \frac{d^2 x_2}{dt^2} &= -1.951\,x_1 + 0.695\,x_2 - 2.010\,x_3 - 1.977\,x_2^{3}, 
    & \frac{d^2 x_2}{dt^2} &= -2\,x_1 + 0.711\,x_2 - 2\,x_3 - 2\,x_2^{3} \\
    \frac{d^2 x_3}{dt^2} &= -2.016\,x_2 + 1.862\,x_3 - 1.956\,x_4 - 2.045\,x_3^{3}, 
    & \frac{d^2 x_3}{dt^2} &= -2\,x_2 + 1.81\,x_3 - 2\,x_4 - 2\,x_3^{3} \\
    \frac{d^2 x_4}{dt^2} &= -2.023\,x_3 + 1.679\,x_4 - 2.072\,x_5 - 2.019\,x_4^{3}, 
    & \frac{d^2 x_4}{dt^2} &= -2\,x_3 + 1.521\,x_4 - 2\,x_5 - 2\,x_4^{3} \\
    \frac{d^2 x_5}{dt^2} &= -1.921\,x_1 - 1.961\,x_4 + 1.925\,x_5 - 2.054\,x_5^{3}, 
    & \frac{d^2 x_5}{dt^2} &= -2\,x_1 - 2\,x_4 + 1.988\,x_5 - 2\,x_5^{3}.
\end{aligned}
\label{eqn:learned-cd}
\end{equation}

We show estimates of the derivatives of one of the state variables and the error in the forward simulations of these learned dynamics starting from the last observation at \(t = 30\) in \Cref{fig:cduff} with \(\mse=3.554\cdot 10^{-2}\) and \(\mae = 3.627\cdot 10^{-2}\). Our model produces very accurate derivative estimates despite only having access to noisy observations of the original state variables. The regularized least squares collocation applied to a space of five times differentiable functions allows for derivative estimates up to fourth order to remain accurate, despite the dynamics being only second order. This level of performance is comparable to data simulation approaches where we know the model  \emph{beforehand} and apply high order nonlinear Kalman smoothing. The key point is that our approach simultaneously discovers the model and estimates the states.

\begin{figure}[ht]
  \centering
  \makebox[\textwidth][c]{%
    \includegraphics[width=1\textwidth]{ 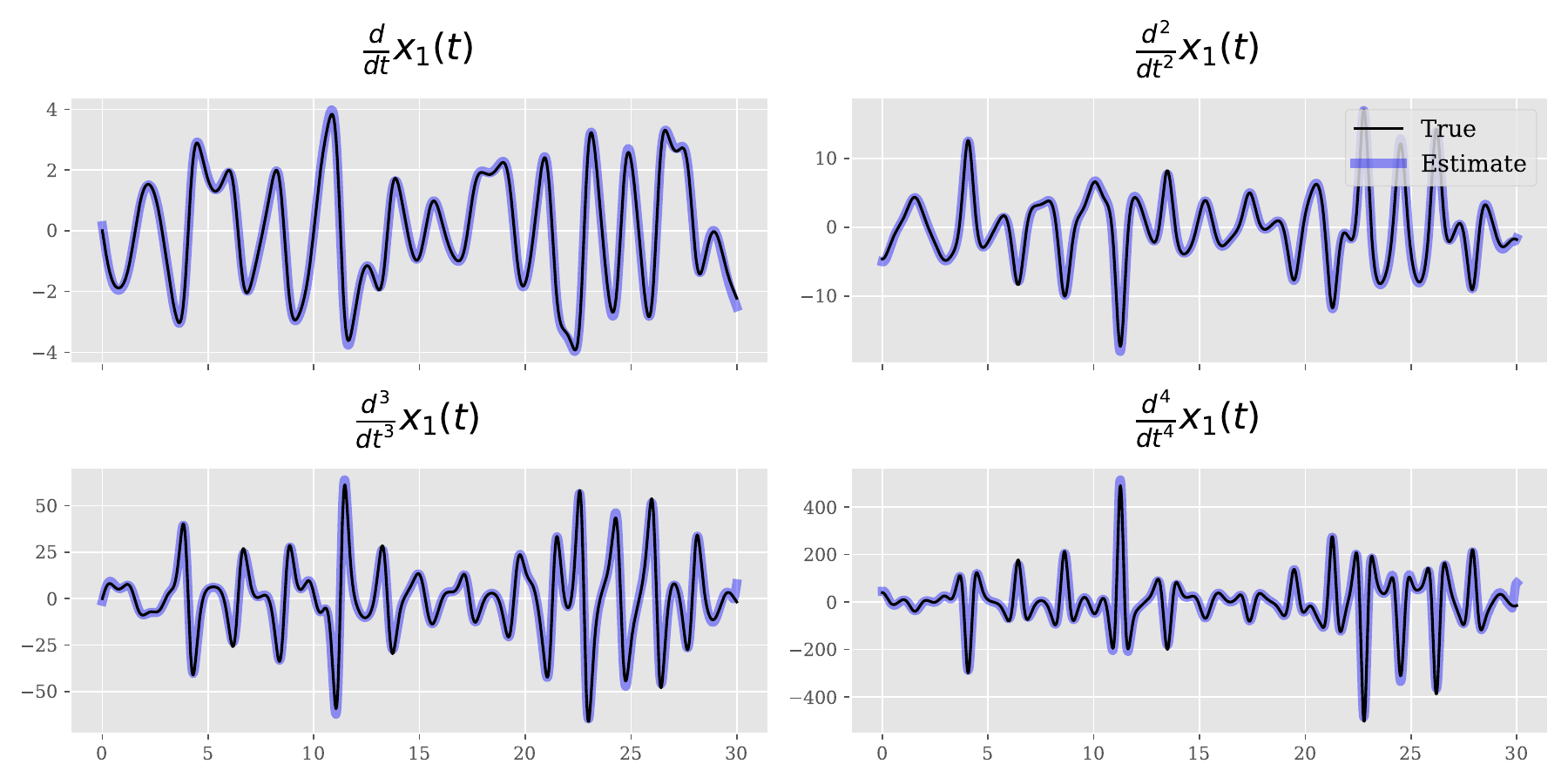}%
  }
  \caption{Results for estimating time derivatives of the coupled Duffing system. The relative mean square errors of the estimates of the first, second, third, and fourth derivatives are
  \(2.32\cdot 10^{-3}, 3.68\cdot 10^{-3}, 6.57\cdot 10^{-3}\) and \(1.1\cdot 10^{-2}\) respectively. In this case, our approach resolves the fine structure in estimating the state. 
  }
  \label{fig:4deriv-cduff}
\end{figure}

\subsection{A misspecified model}\label{subsec:miss-spec}
We considered a case where the dynamics are misspecified by fitting polynomials to nonlinear pendulum dynamics. Our least squares relaxation of the dynamics allows some of the unmodeled dynamics to be accommodated as an error term, as in \cref{eq:control-version}. Robustness to misspecification is very useful; in many experimental or designed processes we may have little to no prior knowledge of the true functional form of the underlying dynamics. In these cases, we hope to maintain highly accurate state estimation (despite the misspecified dynamics), while producing a reasonable model for simulating the future.  

We consider the nonlinear pendulum,
\begin{equation}\label{eqn:nonlin-pend}
\frac{d^2x}{dt^2} = -\sin(x),
\end{equation}
and fit two models, one that is linear and another that is cubic. 
Simulated trajectories were created with initial displacement of $x(0) = 3$ and no initial velocity $\frac{d}{dt}x(0) = 0$.  Observations were made with a sampling period of \(\Delta t = 1\) over the interval \(t\in [0,100]\), with Gaussian noise ($\sigma^2 = 0.01$). 

The relatively large initial condition \(x(0) = 3\) means that the nonlinearity has a significant effect; replacing \(\sin(x)\) with a cubic Taylor series centered around zero results in finite time blowup with these initial conditions. 
We obtained the following models:
\begin{align}
    \text{Linear:}\quad 
    \frac{d^2x}{dt^2} &= -0.151\,x
    \label{eq:linear}\\[4pt]
    \text{Cubic:}\quad 
    \frac{d^2x}{dt^2} &= -0.796\,x + 0.086\,x^3.
    \label{eq:cubic}
\end{align}
Both of these perform very well in state estimation, despite only forming a somewhat crude approximation to the dynamics as evaluated through forward simulation (see \Cref{fig:nonlin-pend}). The filtering error of the linear model was \(\mse = 1.250\cdot 10^{-1}\)
and for the cubic model, was \(\mse = 2.221\cdot 10^{-2}\). 
The linear model matched the period of the true nonlinear dynamics \cref{eqn:nonlin-pend}, but did not have enough degrees of freedom to match the amplitude and misrepresented the shapes of the peaks. The cubic model performed quite well, especially in state estimation and matching the shapes of the oscillations, but has a slightly incorrect period. 
\begin{figure}
    \centering
    \includegraphics[width=
    1\linewidth]{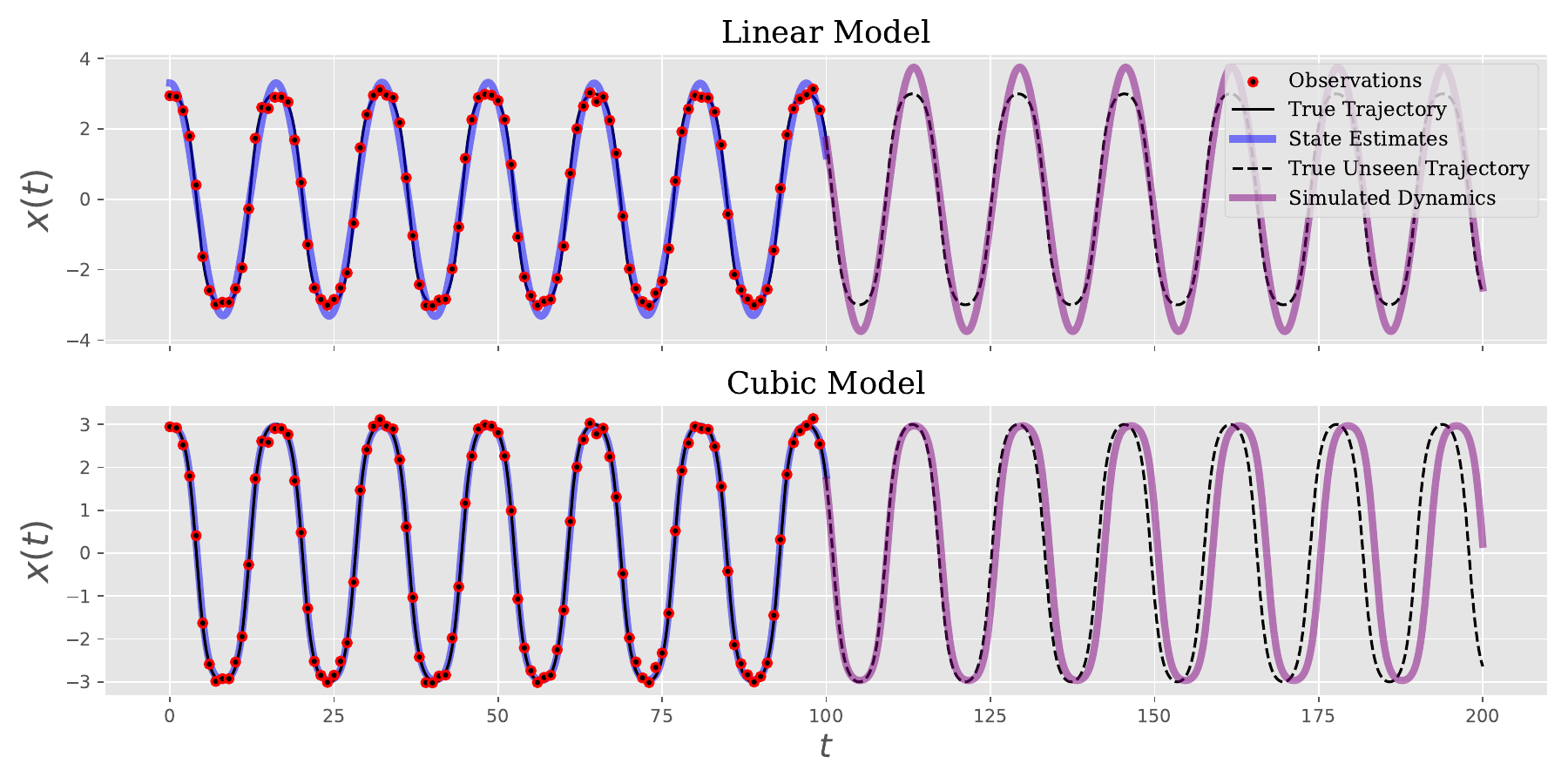}
    \caption{ Results for misspecified linear and cubic models for nonlinear pendulum. Observations were sampled at a rate \(\Delta t= 1\) up to \(t=100\) with added noise with variance \(\sigma^2=0.01\). Learned dynamics for both linear and cubic model, \cref{eq:linear} and \cref{eq:cubic}, are simulated from \(t=100\) to \(t=200\) where the linear model successfully captures periodicity of the true trajectory, but fails to capture the correct amplitude, whereas this is flipped for the cubic model.
}
    \label{fig:nonlin-pend}
\end{figure}
\subsection{Benchmarking}\label{subsec:comparisons}
We now consider two systematic experiments testing the robustness of our method to noise, and comparing primarily to ODR-BINDy \cite{Fung2025-odr}. 

\subsubsection{Damped nonlinear oscillator}\label{subsubsec:nonlin-osc}
We considered increasing noise levels with fixed trajectory length and observation frequency for a nonlinear damped oscillator that was studied in \cite{Fung2025-odr}. The dynamics are given by
\begin{equation}
\begin{aligned}
    \frac{d}{dt}x_1 &= -0.1 x_1^3 + 2x_2^3 \\
    \frac{d}{dt}x_2 &= 2x_1^3 -0.1 x_2^3. 
\end{aligned}
\end{equation}
We took the initial conditions \(\vect{x}(0) = (2,0)\), and took samples of both variables with sampling period \(\Delta t = 0.05\) for \(t \in [0,10]\). We first considered an example with noise level \(\sigma = 0.32\). We set trajectory regularization \(\lambda = 10^{-5}\), data weight \(\alpha = 10\), collocation penalty \(\beta = 10^5\), and used STLSQ with threshold \(0.08\) and ridge parameter \(100\). This is a surprisingly challenging example, especially with this level of noise--our model does not identify the correct support, but still performs quite well in estimating the state. We obtained \(\mse = 6.60\times 10^{-2}\), \(\mae = 2.47\times 10^{-1}\), and the equations
\begin{equation}
    \begin{split}
        \frac{d}{dt}x_1 &= 0.078\, x_2 - 0.560\, x_1 x_2^{2} + 1.859\, x_2^{3} \\[6pt]
        \frac{d}{dt}x_2 &= - 0.008\, x_1 - 0.212\, x_1^{2} - 1.950\, x_1^{3} - 0.284\, x_1 x_2^{2}.
    \end{split}
    \label{eqn:learned-osc}
\end{equation}
The oscillatory terms were recovered with reasonable accuracy, but the small damping was hard to capture correctly. From \Cref{fig:extra-osc}, we can see that the state estimates and reproduced dynamics are still close, which suggests some ill-posedness in the sparse recovery problem.

Next, we tested on a sequence of noise levels \(\sigma = 0.02,0.04,0.08,0.16,0.32,0.64\) and averaged the values of \(\mse\) and \(\mae\) over \(32\) random seeds. In this case, we found that including ensembling \cite{fasel_ensemble-sindy_2022} in the inner sparsifier greatly improved the consistency of the results for higher noise levels. We compared five models, weak SINDy\cite{Messenger2021a}, weak SINDy with ensembling, JSINDy, JSINDy with ensembling, and ODR-BINDy \cite{fasel_ensemble-sindy_2022,Fung2025-odr}. We used PySINDy for ensembling and weak SINDy \cite{Kaptanoglu2022}, and the GitHub repository\footnote{\url{https://github.com/llfung/ODR-BINDy}} associated to \cite{Fung2025-odr} for ODR-BINDy.  We also tested with other two-step variants of SINDy, but they all performed worse than weak SINDy, so we omit them. As ODR-BINDy also produces state estimates, we compare the state estimation error between JSINDy, JSINDy with ensembling and ODR-BINDy. 

We note that our ensembling only occurs inside of the sparsifier \(\mathcal{S}\), and therefore does not incur much additional computational cost, which is dominated by the nonlinear least squares problems rather than the sparsifying iterations. In cases where the model is already successful, ensembling can introduce additional extraneous features, but for challenging problems, it seems to improve the robustness. Perhaps surprisingly, the error of weak ensemble SINDy was quite high for low levels of noise, but did not degrade as noise was increased for the scale that we considered.

In terms of the coefficient error, and without ensembling the sparsifier, our model performs noticeably worse than ODR-BINDy; maximizing the Bayesian evidence seems to be a good model selection criterion. Upon introducing ensembling, our model selection stabilizes and the coefficient errors become comparable. In terms of state estimation, JSINDy both with and without ensembling performs well across varying levels of noise, regardless of the coefficient accuracy, owing to the stabilizing effect of the RKHS regularization and least squares relaxation.

\begin{figure}
  \centering
  \makebox[\textwidth][c]{
    \includegraphics[width=1.1\textwidth]{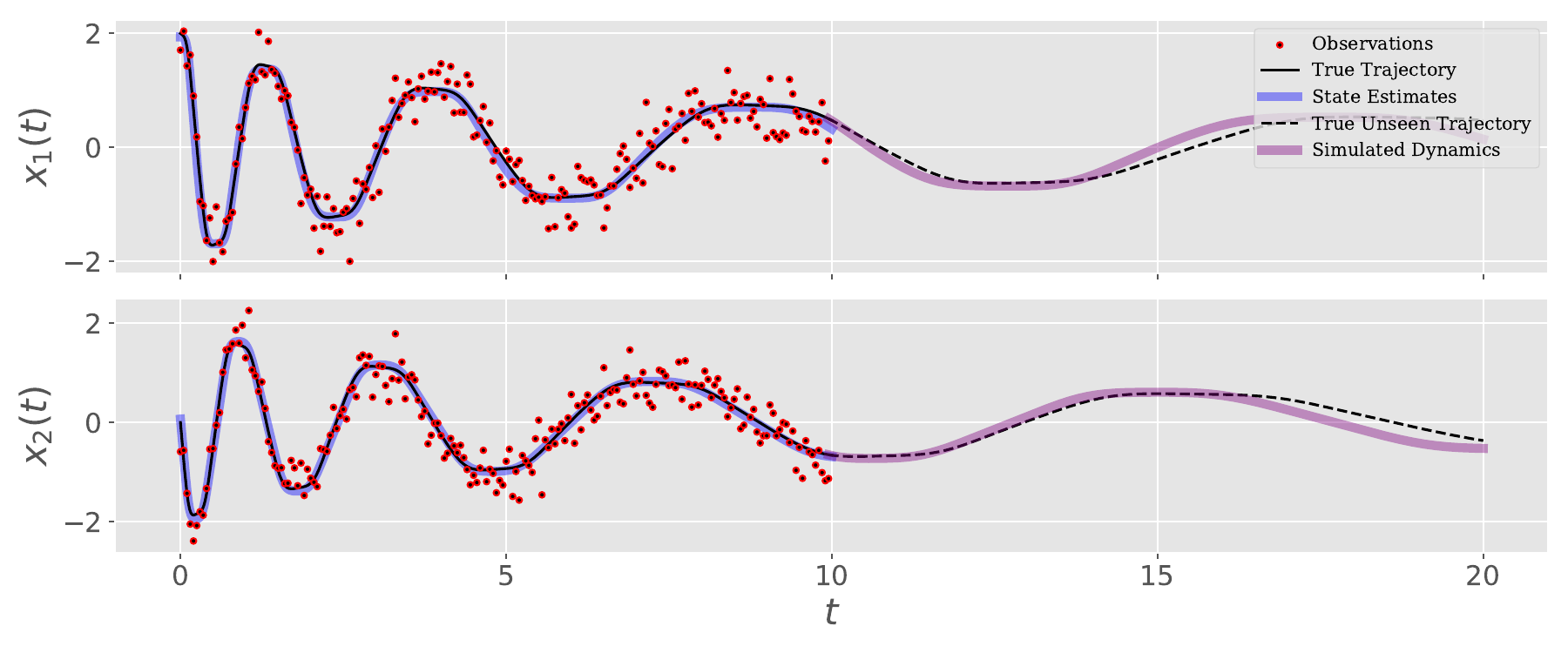}
  } 
  \caption{Results for the nonlinear damped oscillator with full state observations, \(\Delta t = 0.05\) in the interval \([0,10]\), and \(\sigma = 0.32\) additive noise. While the learned equations in \cref{eqn:learned-osc} do not match the true dynamics in terms of the coefficient values, JSINDy effectively infers the states, and produces a reasonable extrapolation in forward simulation.}
  \label{fig:extra-osc}
\end{figure}

\begin{figure}
  \centering
  \makebox[\textwidth][c]{%
    \includegraphics[width=1.1\textwidth]{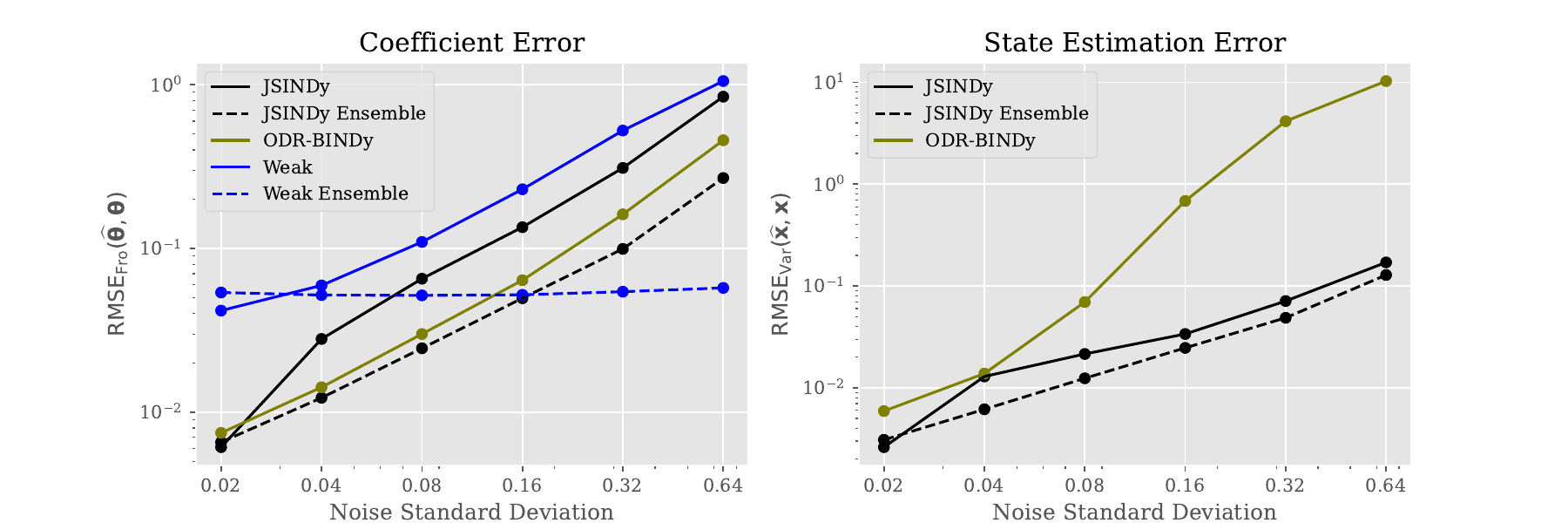}%
  }
  \caption{Results for the damped nonlinear oscillator across varying levels of noise. Left: coefficient error \(\mae\), Right: state estimation error \(\mse\).  In this case, ensembling significantly improves the coefficient error for both weak SINDy and JSINDy. For JSINDy, including ensembling inside of the sparsifier brings the coefficient error from noticeably worse than ODR-BINDy to slightly better. Overall, JSINDy performs very well in state estimation across varying levels of noise.}
  \label{fig:nonlin-osc}
\end{figure}

\subsubsection{Lorenz 63 example}
\label{subsubsec:lorenz-bench}
We also tested the performance of our method on varying levels of noise and durations of observations with fixed observation frequency on the Lorenz 63 dynamics \eqref{eqn:true_lorenz} with results in \Cref{fig:lorenz_benchmark}. We applied JSINDy with data weight \(\alpha = 1\), collocation weight \(\beta = 10^5\), regularization \(\lambda = 10^{-5}\), and used SR3 as a sparsifier with an L1 penalty with weight 5, and relaxation coefficient \(\nu = 0.05\) \cite{SR3}. We compare to ensemble weak SINDy and ODR-BINDy. The setup is identical to that of \cite[Sec. 3.2]{Fung2025-odr}, though we did not exclude the constant function from the library for weak SINDy. ODR-BINDy performed best in variable selection, likely owing to their approach of Bayesian evidence maximization. While also better than JSINDy in coefficient error, the difference is somewhat less pronounced. A much larger difference can be seen between JSINDy and ensemble weak SINDy, as well as the other models shown in \cite[Sec. 3.2]{Fung2025-odr}.  
\begin{figure}
    \centering
    \includegraphics[width=1\linewidth]{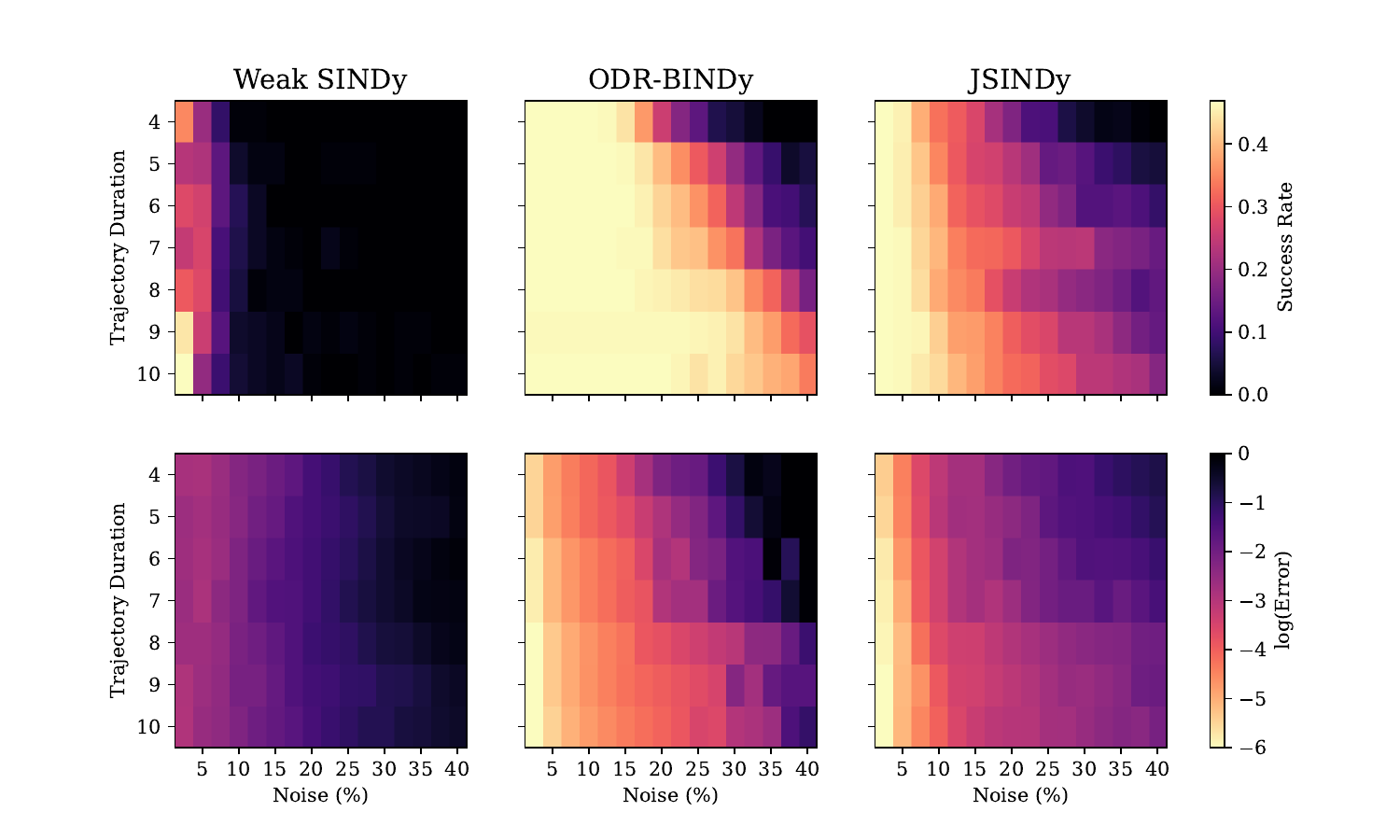}
    \caption{Experiment across different trajectory lengths and noise levels, averaged over 128 independent trials and \(\Delta t = 0.01\). \textbf{Top:} Proportion of trials where the correct support was exactly recovered \textbf{Bottom:} Relative coefficient error \(\mae\) of identified system. }
    \label{fig:lorenz_benchmark}
\end{figure}

\section{Discussion and conclusion}\label{sec:discussion}

A key technique used in this work is the least squares relaxation of the ODE constraint:
\begin{equation}
    \dtp \vect{x} = f(\xdiffs(t);\vect{\theta})
\end{equation}
into the penalty term 
\begin{equation}
    \LL_C(\vect{x},\vect{\theta}) = \frac{1}{2}\int_0^T \left|\dtp \vect{x} - f(\xdiffs(t);\vect{\theta})\right|^2 dt. 
\end{equation}

In this section, we provide some insight into the mathematical underpinnings of this approach, discuss the theoretical advantages, and highlight the role of regularization, while providing additional context and reference to trace many of these ideas in the literature.

The least squares relaxation has been applied in the PDE inverse problem setting \cite{tristan_pde}, where it was found to mitigate the effects of local minima \cite{vanLeeuwen2013Mitigating}. It was further analyzed in \cite{tristan_analysis}, which presented a representer theorem for relaxations of parametric linear PDEs and cast the inversion problem as minimizing a special RKHS norm. The least squares approach is also common in the ML literature, most notably through physics-informed neural networks \cite{Raissi2019-so}, as well as approaches to learning differential equations and parameter estimation \cite{sun2022bayesiansplinelearningequation,Chen2021,ijcai2021p283,Fung2025-odr,jalalian2025dataefficientkernelmethodslearning,rudy_soft}. 
In our case, the additional error associated to the ODE from applying a least-squares approach can be beneficial to the overall learning process, rather than a limitation. 
We prefer the relaxation approach to alternatives, such as reduced or shooting methods that view the ODE as a constraint and enforce feasibility from the start, which can be seen in neural ODEs, inverse problems, and optimal control \cite{kidger2021on,Chen2018,Givoli2021-bc,Lions1971}, or formulations that enforce the hard constraint within an all-at-once formulation such as \cite{Hokanson2023-sr}, which optimizes by solving the discretized KKT conditions.

The RKHS regularized least squares approach that we take has two major advantages over reduced approaches. First, state estimates for any length of time are well defined for a much broader class of ODEs, which can be seen from the existence of bounded solutions in \Cref{thm:cont-existence}. Further, the state estimates are allowed to be self-correcting in that the additional data misfit terms may inform the state estimates. In contrast, in reduced approaches, certain parameter values may lead to finite time divergence of the ODE. 
Second, this approach improves the regularity of the objective with respect to the system parameters \(\vect{\theta}\). For chaotic dynamics, we can generically expect that small perturbations to parameters of the system will lead to exponentially large changes over long periods of time relative to the Lyapunov exponent. 
Several works have observed that this leads to extremely ill-conditioned and non-convex objectives in reduced formulations, especially in chaotic regimes where the mismatch between simulated and observed trajectories can grow exponentially with time \cite{ribeiro2020smoothness,Chakraborty2024multistep,Carbonell2015multishooting}. 
As a result, gradient-based optimization often stalls, converges to spurious local minima, or fails entirely. These issues can be mitigated significantly through multiple-shooting and multipoint-penalty methods \cite{Chakraborty2024multistep,Carbonell2015multishooting,Chung2022-hz}, which break a long trajectory into shorter segments, while simulating within each segment. The full space approaches considered in this work can, in a sense, be seen as the limit of many segments in a multiple-shooting method, but we leave a more systematic comparison to future work.

The least squares relaxation avoids simulating forward trajectories altogether and instead distributes the residual over time, substantially smoothing the optimization landscape and improving robustness. Further, it makes the objective a \emph{linear} least squares regression problem in the coefficients \(\vect{\theta}\) for any fixed state estimate \(\vect{x}\). This natural perspective also appears in \cite{tristan_analysis} in the context of recovering a coefficient field in a PDE-constrained inverse problem. \cite{Fung2025-odr} points out the connection to orthogonal distance regression in all-at-once approaches, casting  optimization with respect to the state variables as adjusting the inputs to correct an error-in-variables bias. Indeed, in the linear case, such an approach appears as early as 1950 in Householder's method for exponential fitting \cite{householder1950prony}, where the error-in-variables issue in Prony's method is corrected by jointly fitting state estimates and discrete time dynamics. 

Least squares relaxation of the ODE constraint can also be seen as minimizing a backwards error of the ODE. Introducing an auxiliary forcing function \(\vect{r}(t)\), we may equivalently write the least squares penalty as
\begin{equation}\label{eq:control-version}
\LL_C(\vect{x},\vect{\theta}) =
\left\{
\begin{alignedat}{2}
& \min_{\vect{r}\in L^2([0,T],\R^d)} && \int_0^T |\vect{r}(t)|^2 \, dt \\
& \text{subject to }  \dtp  &&\vect{x}(t) = f(\xdiffs(t);\vect{\theta}) + \vect{r}(t)
\end{alignedat}
\right.,
\end{equation}
where the ODE is understood in the weak sense. 

When this formulation of \(\mathcal L_C\) is embedded into the optimization problem \cref{eqn:jsindy-opt-sparse} with an appropriate lifting to optimize with respect to \(\vect{r}\), this can be interpreted as finding a control or forcing term $\vect{r}(t)$ with small \(L^2\) norm which is optimized to steer \(\vect{x}\) to match the data and have small RKHS norm. This view is also well motivated when the model is misspecified, where \(\vect{r}\) can act as an error term in the dynamics. 

While one can control the error in terms of the size of the residual \(\vect{r}\) and a Lipschitz assumption on \(f\) via classical error analysis for ODEs using Gr\"{o}nwall's inequality, this will generally result in overly pessimistic error estimates which grow exponentially in time as it assumes worst case behavior of \(\vect{r}\). Rather, \(\vect{r}\) largely pushes the trajectory to match the data, so its effects should be small when the model is well specified, and large in misspecified cases, where it is necessary. By assimilating the observation data over a long trajectory, the error can be compared to that of Kalman filtering/smoothing (see e.g.~\cite{aravkin2017generalized}) under very low system noise and a reasonably accurate system model. 

In our case, the regularity of \(\vect{r}\) follows from the smoothness of \(\vect{x}\) lent by the RKHS regularization and continuity of \(f\), noting that \(\vect{r}(t)  = \dtp\vect{x}(t)-f(\xdiffs(t);\vect{\theta})\), the difference of two continuous functions. However, without regularization of \(\vect{x}\), there is no reason to believe that \(\vect{r}\) is continuous, so \(\vect{x}\) will only be piecewise \(p\)-times differentiable; for first order ODEs, one should expect kinks in the state estimates without higher order regularization. 
The closely related approach of ODR-BINDy \cite{Fung2025-odr} interprets the least squares approach as a stochastic relaxation, modeling error in the ODE as independent white noise. However, it can be difficult to rigorously interpret the noise stochastically taking the function space limit of inferring \(\vect{x}(\cdot)\); in the random setting and without regularization, the covariance of the residuals becomes increasingly important as the mesh is refined and \(\vect{r}(t)\) is considered as a random process. In the case that the residuals are uncorrelated (e.g.  \(\vect{r}(t),\vect{r}(t')\) are independent for \(t\neq t'\) ), then either the pointwise variances have to blow up, or the noise completely washes out under the integral. The probabilistic model can be seen as applying maximum a posteriori estimation to the stochastic differential equation 
\begin{equation}
    d\vect{x} = f(\vect{x})dt + \epsilon d\vect{B}_t.
\end{equation}
In this model, \(\frac{d}{dt} \vect{x}\) will not be defined pointwise, possessing only the regularity of white noise. Of course, these considerations do not manifest computationally; the error terms end up correlated, there is no ``real" Brownian motion happening, and most reasonable discretizations will implicitly regularize the trajectory estimate. However, this does raise the question of determining a computationally effective and complete stochastic model in the vein of probabilistic numerics \cite{probnum}, as well as the idea of penalizing the residual in a stronger norm than \(L^2\). 
\subsection{Conclusion}
We proposed and demonstrated the effectiveness of JSINDy in both system identification and filtering by performing both jointly. We addressed a number of challenging problems in continuous time system identification within a single framework based on collocation in RKHSs, including scarce sampling, far below what has been considered in previous works, partially observed states, and higher order dynamics. 
Building on other recent advancements in all-at-once system identification methods \cite{Fung2025-odr,Chen2021,ijcai2021p283,Ribera2022-nz,Hokanson2023-sr}, this work provides more strong evidence for the superiority of all-at-once methods for system identification, and demonstrates significant flexibility. In short, the assumption that the data satisfy some dynamics is a very good prior to use when it is true. 
Overall, the RKHS formulation provides a flexible and effective tool for modeling while providing a principled approach to regularization, resulting in stable optimization, and allowing for seamlessly incorporating prior assumptions, data, and dynamics.

Currently, the primary drawback of JSINDy is that dense kernel matrices arise from the global approximation, leading to cubic complexity in the trajectory length. This can, however, be mitigated through a state-space representation of Mat\'ern Gaussian processes as in \cite{Saa12}, which essentially amounts to a finite difference approximation to the associated Sobolev norm, and would result in banded Jacobian matrices. 

Further improvements are available on the side of optimization. The objective function in \cref{eqn:jsindy-opt} is amenable to partial minimization \cite{Aravkin2024-bz,O-Leary2013-mg,Golub1973-el} with respect to the coefficients \(\vect{\theta}\) in solving the fixed-support nonlinear least squares problems. Our approach of alternating with an algorithmic sparsifier is another choice which could use more investigation. In particular, incorporating Bayesian evidence as a model selection criteria could be highly effective, while raising interesting questions for efficiently solving the discrete optimization problems that arise \cite{Fung2025-odr,Fung2025-aa,Klishin2025-pl}. 

\section*{Acknowledgments}
AH and BH acknowledge support from the National Science Foundation under awards 
2208535 (Machine Learning for Bayesian Inverse Problems) and 
2337678 (CAREER: Gaussian Processes for Scientific Machine
Learning: Theoretical Analysis and Computational Algorithms). AH acknowledges support from a Carl E. Pearson Fellowship. JMSH was partially supported by the Veterans Administration via the Post-9/11 GI Bill.
JNK was
supported in part by the NSF AI Institute for Dynamical Systems (dynamicsai.org), grant 2112085. The numerical experiments involving ODR-BINDy were completed on Hyak, UW’s high performance computing cluster, which is funded by the UW student technology fee. 

\printbibliography
\appendix
\section{Theory}
We provide some more details of the RKHS theory that underlies our approach and the least squares relaxation approach we take for the ODE. 
\subsection{RKHS and representer theorems}\label{subsec:representer}

We first give a version of the representer theorem in the Hilbert space setting which is appropriate for \Cref{thm:obj-representer}. These theorems are well known and appear in various forms throughout the literature; we state this for completeness. The variant that we present can be found in \cite{unifying-representer,Owhadi2019-xh,Scholkopf2001-co}, and similar results appear in \cite{CHEN2021110668,ZhouDerivative,kanagawa2018gaussianprocesseskernelmethods}. 
\begin{theorem}[Representer]\label{thm:rep}
    Let \(\mathcal{H}\) be a separable Hilbert space. Let \(F:\mathcal{H} \mapsto \mathbb{R}\) be given by
    \begin{equation}
        F(u) = f(
        \langle u,w_1\rangle,...,\langle u,w_m\rangle) + J(\|u\|)
    \end{equation}
    where \(J:\mathbb{R}^+\mapsto\mathbb{R} \cup \{\infty\}\) is strictly increasing, \(\langle \cdot, w_i\rangle \) are bounded linear functionals, and \(f:\mathbb{R}^m \mapsto \mathbb{R}\cup \{\infty\}\). Then if \(F\) admits a minimizer \(u^*\) with \(F(u^*) < \infty\), then 
    \begin{equation}
        u^* = \sum_{i=1}^m c_i w_i. 
    \end{equation}
\end{theorem}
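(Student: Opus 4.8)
The plan is to exploit the orthogonal decomposition of $u$ relative to the finite-dimensional subspace generated by the representers of the functionals $w_i$. First I would invoke the Riesz representation theorem to identify each bounded linear functional $w_i$ with an element of $\mathcal{H}$ (writing $w_i$ for both), so that $\langle u, w_i\rangle$ is a genuine inner product. Setting $V := \operatorname{span}\{w_1,\dots,w_m\}$, which is finite-dimensional and hence a closed subspace, the orthogonal projection onto $V$ is well defined and every $u \in \mathcal{H}$ decomposes uniquely as $u = u_\parallel + u_\perp$ with $u_\parallel \in V$ and $u_\perp \in V^\perp$.

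The key observation is that the two terms of $F$ respond differently to this decomposition. Since $u_\perp \perp w_i$ for each $i$, we have $\langle u, w_i\rangle = \langle u_\parallel, w_i\rangle$, so the data term $f(\langle u,w_1\rangle,\dots,\langle u,w_m\rangle)$ depends only on $u_\parallel$ and is unchanged by discarding $u_\perp$. By contrast, the Pythagorean identity gives $\|u\|^2 = \|u_\parallel\|^2 + \|u_\perp\|^2$, so $\|u_\parallel\| \le \|u\|$ with equality if and only if $u_\perp = 0$. Because $J$ is strictly increasing, this yields $J(\|u_\parallel\|) \le J(\|u\|)$, again with equality if and only if $u_\perp = 0$.

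Combining these, I would show $F(u_\parallel) \le F(u)$ for every $u$, with strict inequality whenever $u_\perp \ne 0$. Applying this to the assumed minimizer $u^*$: the hypothesis $F(u^*) < \infty$ guarantees both $f(\langle u^*,w_1\rangle,\dots,\langle u^*,w_m\rangle) < \infty$ and $J(\|u^*\|) < \infty$, and since the data term is unchanged while $J(\|u^*_\parallel\|) \le J(\|u^*\|)$, the competitor $u^*_\parallel$ satisfies $F(u^*_\parallel) \le F(u^*)$. If $u^*_\perp \ne 0$ the inequality would be strict, contradicting minimality; hence $u^*_\perp = 0$ and $u^* = u^*_\parallel \in V$, i.e.\ $u^* = \sum_{i=1}^m c_i w_i$ for some coefficients $c_i$.

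The argument is almost entirely routine, and there is no genuine obstacle so much as a point of care: one must correctly track the possibility that $f$ takes the value $+\infty$. The finiteness of $F(u^*)$ is exactly what lets me carry out the comparison at the level of real numbers rather than in the extended reals, and the strict monotonicity of $J$ is the single ingredient that upgrades ``$u^*_\parallel$ is no worse'' to ``$u^*_\perp$ must vanish.'' I note that closedness of $V$ follows immediately from finite dimensionality, so neither completeness beyond the Hilbert structure nor the stated separability of $\mathcal{H}$ is actually needed in the projection step.
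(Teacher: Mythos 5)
Your proof is correct and is essentially the canonical argument. The paper itself does not reproduce a proof of this theorem at all---it states it ``for completeness'' and defers to the cited literature---but the orthogonal-decomposition/Pythagoras argument you give is precisely the standard proof appearing in those references, and you handle the two delicate points properly: the finiteness of $F(u^*)$ is what licenses the strict-inequality comparison in the extended reals (since neither term can equal $-\infty$, finiteness of the sum forces both terms finite), and your observation that separability of $\mathcal{H}$ is never needed is also accurate, since closedness of the finite-dimensional span suffices for the projection step.
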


\subsection{Mat\'ern kernels and Sobolev spaces}\label{subsec:sobolev}
Here, we review some prerequisite RKHS and Sobolev space theory, giving more details on the kernel functions we use, and discussing the smoothness of functions in RKHS \(\X\) associated to the kernels we define in \cref{eqn:kernel}. Much of this will pull directly from \cite{kanagawa2018gaussianprocesseskernelmethods, Wendland2010-gl} where we direct the interested reader for additional details. First, recall from \cref{eqn:kernel} that we take the sum of a constant kernel with a Matérn kernel
\begin{equation}
    k(t,t') = c_0 + c_1 h_{\nu}\left(\frac{|t - t'|}{\ell}\right),\quad 
    h_{\nu}(r) = \frac{2^{1-\nu}}{\Gamma(\nu)}
\left(\sqrt{2\nu}r\right)^{\nu}
K_{\nu}\left(\sqrt{2\nu}\,r\right). 
\end{equation}
From \cite[Ex.~2.6]{kanagawa2018gaussianprocesseskernelmethods}, the RKHS \(\mathcal{H}_k\) associated to this kernel is norm equivalent to the Sobolev space \(H^{\nu +\frac{1}{2}}([0,T])\), functions which are \(\nu +\frac{1}{2}\) times differentiable in the mean-square sense. Compactness of the domain means that adding a constant to the kernel function does not affect the equivalence. By the Sobolev embedding thoerem \cite[Thm.~5.6.6, p.~270]{evansPDE}, this implies that \(\mathcal{H}_k\) is also compactly embedded in the H\"older space \(C^p([0,T])\) for \(p <\nu\). 

When \(\nu\) is a half integer (i.e. \(\nu = p+\frac{1}{2}\)), \(h_\nu\) reduces to the product of a polynomial and exponential in \(r\). 
\begin{equation}
    h_{\nu}(r) = \exp\left(-\sqrt{2\nu}r\right)\frac{p!}{(2p)!}\sum_{i=0}^p\frac{(\nu-\frac{1}{2}+i)!}{i!(\nu-\frac{1}{2}-i)!}\left(2\sqrt{2\nu}r\right)^{\nu-\frac{1}{2}-i}. 
\end{equation}
Stably computing high order derivatives of this \(h_\nu\) can still be challenging due to the absolute value inside of \(h_\nu\), and directly applying autodiff to this representation can cause catastrophic cancellation. However, for \(j < 2\nu\), \(\frac{d^j}{dt^j} h_{\nu}(\frac{|t - t'|}{\ell})\), 
can also be written as the product of a polynomial and exponential function; we compute this representation of the derivatives in SymPy \cite{sympy}. We then convert the resulting expressions to Jax \cite{jax2018github} using Sympy2Jax \cite{Kidger_sympy2jax_2025}.

\subsection{Cholesky factorization and choice of basis}\label{subsec:cholesky}
In our numerical implementation, we perform a linear change of variables in order to simplify bookkeeping and 
mitigate some of the effects of ill-conditioning and numerical roundoff on convergence by
reparametrizing with the Cholesky factors of \(\vect{K}\). That is,
write 
\begin{equation}
    \vect{K} = \vect{C}\vect{C}^\top
\end{equation}
for the Cholesky factorization of \(\vect{K}\), and set 
\begin{equation}
    \tilde{\vect{z}} = (\tilde{\vect{z}_1},...,\tilde{\vect{z}}_d),\quad \tilde{\vect{z}}_m = \vect{C}^\top \vect{z}_m.
\end{equation}
Under this parametrization, we have that 
\begin{equation}
    \vect{x}(\cdot;(\vect{C}^{-T}\tilde{\vect{z}}_1,...,\vect{C}^{-T}\tilde{\vect{z}}_d)) = \vect{x}(\cdot;\vect{z}). 
\end{equation}

It is clear that these are interchangeable \emph{mathematically}. 
Computationally, they relegate some of the bookkeeping with tracking the kernel matrix to an initial Cholesky factorization, which, in our implementation, is absorbed into the parametrization. Even though this still entails factorizations of ill-conditioned matrices, and direct solvers are also applied downstream, this empirically seems to mitigate some of the effects of ill-conditioned Jacobian matrices in solving regularized least-squares problems.

\section{Computational tools}
We perform all of our computations using Jax \cite{jax2018github}, and 
solve ODEs using the Tsit5 \cite{tsitouras2011runge} solver with a PID controller for adaptive timestepping \cite{hairer2002solving-ii,soderlind2002automatic} as recommended by, and implemented in Diffrax \cite{kidger2021equinox,kidger2021on}. Comparisons with SINDy use the PySINDy package \cite{Kaptanoglu2022}. Comparisons with ODR-BINDy were made using the implementation \footnote{\url{https://github.com/llfung/ODR-BINDy}} from the original work \cite{Fung2025-odr}. Our experiments and current implementation are available in the GitHub repository \cite{jsindyRepo}; an optimized implementation of JSINDy will soon be added to PySINDy. 
Our computations with JAX used an RTX-5000 Ada Generation GPU in a Lambda machine.

\end{document}